\title {Lempel-Ziv Factorization May Be Harder Than Computing All Runs}
\author{Dmitry Kosolobov}
\affil{Ural Federal University\\
  Ekaterinburg, Russia\\
  \texttt{dkosolobov@mail.ru}}
\authorrunning{D. Kosolobov}
\subjclass{F.2.2 Pattern Matching}
\keywords{Lempel-Ziv factorization, runs, repetitions, decision tree, lower bounds}
\begin{document}

\maketitle

\begin{abstract}
The complexity of computing the Lempel-Ziv factorization and the set of all runs (= maximal repetitions) is studied in the decision tree model of computation over ordered alphabet. It is known that both these problems can be solved by RAM algorithms in $O(n\log\sigma)$ time, where $n$ is the length of the input string and $\sigma$ is the number of distinct letters in it. We prove an $\Omega(n\log\sigma)$ lower bound on the number of comparisons required to construct the Lempel-Ziv factorization and thereby conclude that a popular technique of computation of runs using the Lempel-Ziv factorization cannot achieve an $o(n\log\sigma)$ time bound. In contrast with this, we exhibit an $O(n)$ decision tree algorithm finding all runs in a string. Therefore, in the decision tree model the runs problem is easier than the Lempel-Ziv factorization. Thus we support the conjecture that there is a linear RAM algorithm finding all runs.
\end{abstract}

\section{Introduction}

String repetitions called runs and the Lempel-Ziv factorization are structures that are of a great importance for data compression and play a significant role in stringology. Recall that a run of a string is a nonextendable (with the same minimal period) substring whose minimal period is at most half of its length. The definition of the Lempel-Ziv factorization is given below. In the decision tree model, a widely used model to obtain lower bounds on the time complexity of various algorithms, we consider algorithms finding these structures. We prove that any algorithm finding the Lempel-Ziv factorization on a general ordered alphabet must perform $\Omega(n\log\sigma)$\footnote{Throughout the paper, $\log$ denotes the logarithm with the base~$2$.} comparisons in the worst case, where $n$ denotes the length of input string and $\sigma$ denotes the number of distinct letters in it. Since until recently, the only known efficient way to find all runs of a string was to use the Lempel-Ziv factorization, one might expect that there is a nontrivial lower bound in the decision tree model on the number of comparisons in algorithms finding all runs. These expectations were also supported by the existence of such a bound in the case of unordered alphabet. In this paper we obtain a somewhat surprising fact: in the decision tree model with an ordered alphabet, there exists a linear algorithm finding all runs. This can be interpreted as one cannot have lower bounds on the decision tree model for algorithms finding runs (a similar result for another problem is provided in~\cite{AhoHirschbergUllman} for example) but on the other hand, this result supports the conjecture by Breslauer~\cite[Chapter~4]{Breslauer} that there is a linear RAM algorithm finding all runs.

The Lempel-Ziv factorization \cite{LempelZiv} is a basic technique for data compression and plays an important role in stringology. It has several modifications used in various compression schemes. The factorization considered in this paper is used in LZ77-based compression methods. All known efficient algorithms for computation of the Lempel-Ziv factorization on a general ordered alphabet work in $O(n\log\sigma)$ time (see~\cite{Crochemore,EvenPrattRodeh,FialaGreene}), though all these algorithms are time and space consuming in practice. However for the case of polynomially bounded integer alphabet, there are efficient linear algorithms~\cite{AbouelhodaKurtzOhlenbusch,ChenPuglisiSmyth,CrochemoreIlieSmyth} and space efficient online algorithms~\cite{OkanoharaSadakane,Starikovskaya,YamamotoIBannaiEtal}.

Repetitions of strings are fundamental objects in both stringology and combinatorics on words. The notion of run, introduced by Main in~\cite{Main}, allows to grasp the whole periodic structure of a given string in a relatively simple form. In the case of unordered alphabet, there are some limitations on the efficiency of algorithms finding periodicities; in particular, it is known \cite{MainLorentz} that any algorithm that decides whether an input string over a general unordered alphabet has at least one run, requires $\Omega(n\log n)$ comparisons in the worst case. In~\cite{KolpakovKucherov}, Kolpakov and Kucherov proved that any string of length $n$ contains $O(n)$ runs and proposed a RAM algorithm finding all runs in linear time provided the Lempel-Ziv factorization is given. Thereafter much work has been done on the analysis of runs (e.g. see~\cite{CrochemoreIlieTinta, CKRRW, KolpakovPodolskiyPosypkinKhrapov}) but until the recent paper~\cite{BannaiIInenagaNakashimaTakedaTsuruta}, all efficient algorithms finding all runs of a string on a general ordered alphabet used the Lempel-Ziv factorization as a basis. Bannai et al.~\cite{BannaiIInenagaNakashimaTakedaTsuruta} use a different method based on Lyndon factorization but unfortunately, their algorithm spends $O(n\log\sigma)$ time too. Clearly, due to the found lower bound, our linear algorithm finding all runs doesn't use the Lempel-Ziv factorization yet our approach differs from that of~\cite{BannaiIInenagaNakashimaTakedaTsuruta}.

The paper is organized as follows. Section~\ref{SectPrel} contains some basic definitions used throughout the paper. In Section~\ref{SectLempelZiv} we give a lower bound on the number of comparisons required to construct the Lempel-Ziv factorization. In Section~\ref{SectRuns} we present additional definitions and combinatorial facts that are necessary for Section~\ref{SectLin}, where we describe our linear decision tree algorithm finding all runs.

\section{Preliminaries}\label{SectPrel}

A \emph{string of length $n$} over the alphabet $\Sigma$ is a map $\{1,2,\ldots,n\} \mapsto \Sigma$, where $n$ is referred to as the length of $w$, denoted by $|w|$. We write $w[i]$ for the $i$th letter of $w$ and $w[i..j]$ for $w[i]w[i{+}1]\ldots w[j]$. Let $w[i..j]$ be the empty string for any~$i > j$. A string $u$ is a \emph{substring} (or a \emph{factor}) of $w$ if $u=w[i..j]$ for some $i$ and $j$. The pair $(i,j)$ is not necessarily unique; we say that $i$ specifies an \emph{occurrence} of $u$ in $w$. A string can have many occurrences in another string. An integer $p$ is a \emph{period} of $w$ if $0 < p < |w|$ and $w[i] = w[i{+}p]$ for $i=1,\ldots,|w|{-}p$. For any integers $i,j$, the set $\{k\in \mathbb{Z} \colon i \le k \le j\}$ (possibly empty) is denoted by $\overline{i, j}$.

The only computational model that is used in this paper is the \emph{decision tree} model. Informally, a decision tree processes input strings of given \emph{fixed} length and each path starting at the root of the tree represents the sequence of pairwise comparisons made between various letters in the string. The computation follows an appropriate path from the root to a leaf; each leaf represents a particular answer to the studied problem.

More formally, a decision tree processing strings of length $n$ is a rooted directed ternary tree in which each interior vertex is labeled with an ordered pair $(i,j)$ of integers, $1\le i,j\le n$, and edges are labeled with the symbols ``$<$'', ``$=$'', ``$>$'' (see Fig.~\ref{fig:tree}). The \emph{height} of a decision tree is the number of edges in the longest path from the root to a leaf of the tree. Consider a path $p$ connecting the root of a fixed decision tree to some vertex $v$. Let $t$ be a string of length $n$. Suppose that $p$ satisfies the following condition: it contains a vertex labeled with a pair $(i,j)$ with the outgoing edge labeled with $<$ (resp., $>$, $=$) if and only if $t[i]<t[j]$ (resp., $t[i]>t[j]$,  $t[i]=t[j]$). Then we say that the vertex $v$ is \emph{reachable} by the string $t$ or the string $t$ \emph{reaches} the vertex $v$. Clearly, each string reaches exactly one leaf of any given tree.
\begin{figure}[htb]
\centering
\includegraphics[scale=0.55]{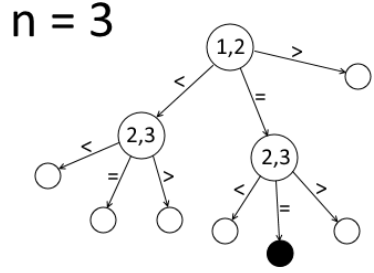}
\caption{A decision tree of height $2$ processing strings of length $3$. The strings $aaa$ and $bbb$ reach the shaded vertex.}
\label{fig:tree}
\end{figure}

\section{A Lower Bound on Algorithms Computing the Lempel-Ziv Factorization} \label{SectLempelZiv}

The Lempel-Ziv factorization of a string $t$ is the decomposition $t = t_1t_2\cdots t_k$, built by the following greedy procedure processing $t$ from left to right:
\begin{itemize}
\item $t_1 = t[1]$;
\item let $t_1\cdots t_{i-1} = t[1..j]$; if $t[j{+}1]$ does not occur in $t[1..j]$, put $t_i=t[j{+}1]$; otherwise, put $t_i$ to be the longest prefix of $t[j{+}1..n]$ that has an occurrence starting at some position $\le j$.
\end{itemize}
For example, the string $abababaabbbaaba$ has the Lempel-Ziv factorization $a.b.ababa.ab.bb.aab.a$.

Let $t$ and $t'$ be strings of length $n$. Suppose $t = t_1t_2\ldots t_k$ and $t' = t'_1t'_2\ldots t'_{k'}$ are their Lempel-Ziv factorizations. We say that the Lempel-Ziv factorizations of $t$ and $t'$ are equivalent if $k = k'$ and $|t_i| = |t'_i|$ for each $i\in \overline{1,k}$. We say that a decision tree processing strings of length $n$ finds the Lempel-Ziv factorization if for any strings $t$ and $t'$ of length $n$ such that $t$ and $t'$ reach the same leaf of the tree, the Lempel-Ziv factorizations of $t$ and $t'$ are equivalent.

\begin{theorem}
The construction of the Lempel-Ziv factorization for a string of length $n$ with at most $\sigma$ distinct letters requires $\Omega(n\log\sigma)$ comparisons of letters in the worst case. \label{LempelZiv}
\end{theorem}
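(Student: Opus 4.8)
The plan is to use a standard information-theoretic (counting) argument: a decision tree of height $h$ has at most $3^h$ leaves, so it suffices to exhibit a family $\mathcal{F}$ of strings of length $n$ over an alphabet of size $\sigma$ such that no two members of $\mathcal{F}$ may reach the same leaf of any tree that finds the Lempel-Ziv factorization; then $3^h \ge |\mathcal{F}|$ forces $h = \Omega(\log|\mathcal{F}|)$, and we will arrange $|\mathcal{F}| = 2^{\Omega(n\log\sigma)}$. The key point is that "reaching the same leaf" is an equivalence relation refined only by the comparison outcomes along the path, so if two strings $t,t'$ reach the same leaf they are order-isomorphic in every coordinate pair queried; hence it is enough to build $\mathcal{F}$ so that any two distinct members have \emph{inequivalent} Lempel-Ziv factorizations, AND are hard to separate, meaning: for any tree that correctly distinguishes all of them, each gets its own leaf. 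Concretely I will make the factorization \emph{length sequence} $(|t_1|,|t_2|,\ldots)$ encode the identity of the string, so distinct strings in $\mathcal{F}$ are forced by the definition of "finds the Lempel-Ziv factorization" into distinct leaves.

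First I would fix a block length and a gadget. Take $\sigma \ge 2$ and split the string (roughly) into $\Theta(n/\log\sigma)$ consecutive blocks, each of length $\Theta(\log\sigma)$; in each block I want to be able to encode one of $\sigma$ (or $\sigma^{\Omega(1)}$, which is the same up to constants in $\log$) choices, via the length of a copied factor. The cleanest gadget: let the first part of the string be a fixed "dictionary" $D = 1\,2\,3\cdots\sigma$ (the $\sigma$ distinct letters in increasing order, contributing $\sigma$ singleton LZ-factors of total length $\sigma$), followed by a sequence of blocks, where the $i$-th block is a copy of a prefix $1\,2\cdots c_i$ of $D$ for a chosen $c_i \in \overline{1,\sigma}$, separated by a fresh "barrier" pattern that cannot be extended. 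Each block of the form $1\,2\cdots c_i$ then becomes (essentially) a single LZ-factor of length $c_i$, so the length sequence of the LZ-factorization reads off $(c_1,c_2,\ldots)$. With $m = \Theta(n/\sigma)$ blocks this already gives $\sigma^m = 2^{\Omega(n\log\sigma)}$ distinct length sequences — but only when $\sigma = O(1)$ relative to $n$; to get the bound uniformly I instead use blocks of length $\Theta(\log\sigma)$ each encoding a number in $\overline{1,\sigma}$ in a way that survives: e.g. use a small fixed alphabet-independent dictionary reused across the whole string and let the $i$-th encoded value be the length of the longest previous-occurrence match, tuned by planting, just before block $i$, a short pattern that has an occurrence exactly $c_i$ positions back. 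This is the part that needs care, and it is the main obstacle.

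So the hard step is the \textbf{gadget construction}: designing blocks of length $O(\log\sigma)$ in which (a) a freely chosen parameter $c_i$ is recoverable from the tuple of LZ-factor lengths touching that block, (b) the choices in different blocks do not interfere (the greedy LZ parse stays "in sync", i.e. a factor never swallows a barrier or runs across two blocks), and (c) the total alphabet used is at most $\sigma$ and the total length is $n$. I would handle interference by inserting between consecutive blocks a barrier consisting of a never-before-seen letter when $\sigma$ is large enough, or, when we must reuse letters, a short carefully-ordered pattern (like a fresh De Bruijn-type or "sawtooth" factor) guaranteeing the greedy longest-previous-match cannot cross it; the synchronization claim is then a small combinatorial lemma proved by induction on blocks. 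Once the gadget is in place, the counting is immediate: the map $t \mapsto (c_1,\ldots,c_m)$ is injective on $\mathcal{F}$ and factors through the LZ length sequence, hence through the leaf reached; so any valid decision tree has $\ge |\mathcal{F}| = 2^{\Omega(n\log\sigma)}$ leaves, giving height $\Omega(n\log\sigma)$, which is exactly the claimed worst-case comparison lower bound.

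A final remark on tightness of the argument: we only ever use that inequivalent LZ-factorizations force distinct leaves, never any structural property of comparisons beyond "$\le 3$ outcomes per query", so the same construction simultaneously shows the bound is information-theoretic and cannot be improved by the counting method beyond $\Theta(n\log\sigma)$ — matching the known $O(n\log\sigma)$ upper bounds cited in the introduction.
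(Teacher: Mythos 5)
There is a genuine gap here, and it is fatal to the route you chose. Your plan is to build a family $\mathcal{F}$ of size $2^{\Omega(n\log\sigma)}$ whose members have pairwise \emph{inequivalent} Lempel-Ziv factorizations, so that the definition of ``finds the factorization'' forces them into distinct leaves. But equivalence of factorizations is defined purely by the sequence of factor lengths, i.e.\ by a composition of $n$; there are fewer than $2^{n}$ compositions, so \emph{any} family with pairwise inequivalent factorizations has size $2^{O(n)}$. Hence this counting scheme can never yield more than an $\Omega(n)$ bound and cannot reach $\Omega(n\log\sigma)$ once $\sigma$ grows with $n$. The same obstruction resurfaces concretely inside your gadget: a factor confined to a block of length $\Theta(\log\sigma)$ has only $O(\log\sigma)$ possible lengths, so it encodes $O(\log\log\sigma)$ bits rather than $\log\sigma$ bits; and your idea of planting ``a pattern that has an occurrence exactly $c_i$ positions back'' encodes $c_i$ in the \emph{source position} of a factor, which the length-only equivalence relation does not see at all. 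Your first gadget (copying a prefix $1\,2\cdots c_i$ of the dictionary) does put $c_i$ into a factor length, but it spends $\Theta(\sigma)$ positions to encode $\log\sigma$ bits, giving only $\Omega(\frac{n}{\sigma}\log\sigma)$.

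The paper's proof escapes this cap by \emph{not} asking the family members to have different factorizations. It takes a fixed dictionary $s$ (the even-indexed letters interleaved with $a_\sigma$) followed by $\Theta(n)$ queries $a_\sigma a_{i_j}$ with $i_j$ even; all $(\sigma/2-1)^{\Theta(n)}$ such strings have the \emph{same} factorization. The bound comes from an adversary/perturbation argument: if two distinct family strings reached the same leaf, one could lower the queried letter $a_p$ to $a_{p-1}$ at every position the tree has not pinned down, obtaining a string $r''$ that still reaches that leaf but whose factorization gains a new one-letter factor (since $a_\sigma a_{p-1}a_\sigma$ never occurs in the dictionary), contradicting correctness. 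So each leaf is reachable by at most one family string --- intuitively, the tree is forced to binary-search every query letter against the $\Theta(\sigma)$-letter dictionary --- and that is exactly where the $\log\sigma$ factor comes from. To repair your proof you would need an argument of this adversary type, showing that the tree must identify each query letter even though the answer (the factorization) is the same for all of them, rather than a bijection between family members and factorization length sequences.
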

\begin{proof}
Let $a_1 < \ldots < a_\sigma$ be an alphabet. To obtain the lower bound, we construct a set of input strings of length $n$ such that the construction of the Lempel-Ziv factorization for these strings requires performing $\Theta(n)$ binary searches on the $\Theta(\sigma)$-element alphabet.

Without loss of generality, we assume that $n$ and $\sigma$ are even and $2 < \sigma < n/2$. Denote $s_1 = a_1a_3a_5\ldots a_{\sigma{-}1}$, $s_2 = a_\sigma a_2 a_\sigma a_4 \ldots a_\sigma a_{\sigma{-}2}a_\sigma a_\sigma$, and $s = s_1s_2$. We view $s$ as a ``dictionary'' containing all letters $a_i$ with even $i$. Note that $|s| = 1.5\sigma$. Consider a string $t$ of the following form:
\begin{equation}
\begin{array}{l}
a_\sigma a_{i_1} a_\sigma a_{i_2} \ldots a_\sigma a_{i_k}a_\sigma a_\sigma,\\
\mbox{where }k = \frac{n - 1.5\sigma - 2}2\mbox{ and }i_j \in \overline{2,\sigma{-}2}\mbox{ is even for any }j\in \overline{1,k}\enspace. \label{eq:form}
\end{array}
\end{equation}
Informally, the string $t$ represents a sequence of queries to our ``dictionary'' $s$; any decision tree finding the Lempel-Ziv factorization of the string $st$ must identify each $a_{i_j}$ of $t$ with some letter of $s$. Otherwise, we can replace $a_{i_j}$ with the letter $a_{i_j-1}$ or $a_{i_j+1}$ thus changing the Lempel-Ziv factorization of the whole string; the details are provided below. Obviously, $|s| + |t| = n$ and there are $(\sigma/2 - 1)^k$ possible strings $t$ of the form~\eqref{eq:form}. Let us take a decision tree which computes the Lempel-Ziv factorization for the strings of length $n$. It suffices to prove that each leaf of this tree is reachable by at most one string $st$ with $t$ of the form~\eqref{eq:form}. Indeed, such decision tree has at least $(\sigma/2 - 1)^k$ leafs and the height of the tree is at least $\log_3 ((\sigma/2 - 1)^k) = k\log_3(\sigma/2 - 1) = \Omega(n\log\sigma)$.

Suppose to the contrary that some leaf of the decision tree is reachable by two distinct strings $r = st$ and $r' = st'$ such that $t$ and $t'$ are of the form~\eqref{eq:form}; then for some $l\in\overline{1,n}$, $r'[l] \ne r[l]$. Obviously $l = |s| + 2l'$ for some $l'\in \overline{1,k}$ and therefore $r[l] = a_p$ for some even $p \in \overline{2,\sigma{-}2}$. Suppose $r'[l] < r[l]$. Let $l_1 < \ldots < l_m$ be the set of all integers $l' > |s|$ such that for any string $t_0$ of the form~\eqref{eq:form}, if the string $r_0 = st_0$ reaches the same leaf as the string $r$, then $r_0[l'] = r_0[l]$. Consider a string $r''$ that differs from $r$ only in the letters $r''[l_1], \ldots, r''[l_m]$ and put $r''[l_1] = \ldots = r''[l_m] = a_{p-1}$. Let us first prove that the string $r''$ reaches the same leaf as $r$. Consider a vertex of the path connecting the root and the leaf reachable by $r$. Let the vertex be labeled with a pair $(i,j)$. We have to prove that the comparison of $r''[i]$ and $r''[j]$ leads to the same result as the comparison of $r[i]$ and $r[j]$. The following cases are possible:
\begin{enumerate}
\item $i, j \ne l_q$ for all $q\in \overline{1,m}$; then $r[i] = r''[i]$ and $r[j] = r''[j]$;
\item $i = l_q$ for some $q\in \overline{1,m}$ and $r[i] < r[j]$; then since $r''[l_q] = a_{p-1} < a_p = r[l_q] = r[i]$ and $r[j] = r''[j]$, we obtain $r''[i] < r''[j]$;
\item $i = l_q$ for some $q\in \overline{1,m}$ and $r[i] > r[j]$; then we have $j \ne p/2$ because $r[p/2] = r'[p/2] = a_{p-1} > r'[i]$ while $r'[i] > r'[j]$, and thus since $r[i] = a_p > r[j]$, we see that $a_{p-1} = r''[i] > r[j] = r''[j]$;
\item $i = l_q$ for some $q\in \overline{1,m}$ and $r[i] = r[j]$; then, by definition of the set $\{l_1,\ldots,l_m\}$, $j = l_{q'}$ for some $q'\in \overline{1,m}$ and $r''[i] = r''[j] = a_{p-1}$;
\item $j = l_q$ for some $q\in \overline{1,m}$; this case is symmetric to the above cases.
\end{enumerate}

Thus $r''$ reaches the same leaf as $r$. But the strings $r$ and $r''$ have the different Lempel-Ziv factorizations: the Lempel-Ziv factorization of $r''$ has one letter factor $a_{p-1}$ at position $l_1$ while $r$ does not since $r[l_1{-}1..l_1{+}1] = a_\sigma a_pa_\sigma$ is a substring of $s = r[1..|s|]$. This contradicts to the fact that the analyzed tree computes the Lempel-Ziv factorization.
\end{proof}

\section{Runs} \label{SectRuns}

In this section we consider some combinatorial facts that will be useful in our main algorithm described in the following section.

The \emph{exponent} of a string $t$ is the number $|t| / p$, where $p$ is the minimal period of $t$. A \emph{run} of a string $t$ is a substring $t[i..j]$ of exponent at least~$2$ and such that both substrings $t[i{-}1..j]$ and $t[i..j{+}1]$, if defined, have strictly greater minimal periods than $t[i..j]$. A run whose exponent is greater than or equal to~$3$ is called a \emph{cubic run}. For a fixed $d \ge 1$, a \emph{$d$-short run} of a string $t$ is a substring $t[i..j]$ which can be represented as $xyx$ for nonempty strings $x$ and $y$ such that $0 < |y| \le d$, $|x|$ is the minimal period of $t[i..j]$, and both substrings $t[i{-}1..j]$ and $t[i..j{+}1]$, if defined, have strictly greater minimal periods.
\begin{example}
The string $t = aabaabab$ has four runs $t[1..2] = aa$, $t[4..5] = aa$, $t[1..7] = aabaaba$, $t[5..8] = abab$ and one $1$-short run $t[2..4] = aba$. The sum of exponents of all runs is equal to $2 + 2 + \frac{7}{3} + 2 \approx 8.33$.
\end{example}

As it was proved in \cite{KolpakovKucherov}, the number of all runs is linear in the length of string. We use a stronger version of this fact.

\begin{lemma}[{see~\cite[Theorem 9]{BannaiIInenagaNakashimaTakedaTsuruta}}]
The number of all runs in any string of length $n$ is less than $n$. \label{RunsNum}
\end{lemma}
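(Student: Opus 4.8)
The statement is the ``runs theorem'' of~\cite{BannaiIInenagaNakashimaTakedaTsuruta}, so the plan is to reproduce that argument. The aim is to construct an injection from the set of runs of a string $t$ of length $n$ into $\overline{2,n}$, which bounds the number of runs by $n-1$. The key device is the \emph{Lyndon root}. I would fix the given alphabet order $<$ together with its reverse $\prec$, and for $\ell\in\{<,\prec\}$ call a nonempty string \emph{$\ell$-Lyndon} if it is strictly $\ell$-smaller than all of its proper suffixes; recall that a primitive string has a unique $\ell$-Lyndon conjugate. For a run $r=t[i..j]$ with minimal period $p$, the period block $t[i..i{+}p{-}1]$ is primitive, so each order $\ell$ yields an $\ell$-Lyndon conjugate $\lambda_\ell(r)$; write $\mathsf{L}_\ell[q]$ for the length of the longest $\ell$-Lyndon word of $t$ starting at position $q$.

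The core step, and the part I expect to be the main obstacle, is the lemma of Bannai et al. asserting that every run $r=t[i..j]$ with minimal period $p$ admits a choice of order $\ell(r)$ and a position $\beta(r)$, with $i<\beta(r)\le j{-}p{+}1$, such that $\lambda_{\ell(r)}(r)$ occurs at $\beta(r)$ and $\mathsf{L}_{\ell(r)}[\beta(r)]=p$. The difficulty is real: for a fixed order, the leftmost occurrence of the Lyndon root that lies strictly inside $r$ need not satisfy $\mathsf{L}_\ell=p$, since the $\ell$-Lyndon word starting there can keep growing past the right end of the run (for instance with $r=bacbac$ followed by a letter exceeding $c$, taking $\ell$ to be $<$). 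One has to show that the single letter $t[j{+}1]$ following the run cannot permit such an extension for \emph{both} orders at once, so at least one of the two orders ``stops'' at the root; the periodicity inside $r$ then rules out longer Lyndon words that stay within the run (a Lyndon word followed by a nonempty proper prefix of itself, or by a whole copy of itself, is never Lyndon, since a Lyndon word is border-free and strictly precedes each proper suffix). Runs of minimal period $1$, i.e.\ maximal blocks $c^e$ with $e\ge2$, fit in too: take $\beta(r)=i{+}1$ with $\ell(r)$ chosen, when needed, so that no longer $\ell(r)$-Lyndon word starts there, giving $\mathsf{L}_{\ell(r)}[\beta(r)]=1=p$.

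Granting the lemma, the remaining task is to show that $\beta$ is injective even after forgetting $\ell(r)$; this is what yields $n-1$ rather than $2(n-1)$. Suppose $\beta(r)=\beta(r')=b$. If $\ell(r)=\ell(r')=\ell$, then $\mathsf{L}_\ell[b]$ equals the minimal period of both runs, and both runs contain the block $[b,b{+}\mathsf{L}_\ell[b]{-}1]$; two runs with the same minimal period whose occurrences overlap in at least that many positions must coincide (their union again has that period, hence that minimal period since it contains $r$, and a run is maximal), so $r=r'$. If $\ell(r)\ne\ell(r')$, say $\ell(r)$ is $<$, $\ell(r')$ is $\prec$, and $p\le p'$ for the two minimal periods, then $\lambda_<(r)=t[b..b{+}p{-}1]$ is a prefix of $\lambda_\prec(r')=t[b..b{+}p'{-}1]$; the first letter of an $<$-Lyndon word is the $<$-smallest of its letters, and the first letter of a $\prec$-Lyndon word is the $<$-largest of its letters, so $t[b]$ is simultaneously the $<$-smallest and the $<$-largest letter of $\lambda_<(r)$. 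Hence $\lambda_<(r)$ is a power of one letter, which for a primitive word forces $p=1$; the period-$1$ case is then dismissed by noting that a $\prec$-Lyndon word of length $\ge2$ cannot end with its own first letter, whereas here $t[b{-}1]$ would have to equal both $t[b]$ and that last letter. In every case $r=r'$, so $\beta$ is an injection into $\overline{2,n}$, and the number of runs is at most $n-1<n$.
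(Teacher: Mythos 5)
The paper offers no proof of this lemma at all—it is imported directly from Theorem~9 of~\cite{BannaiIInenagaNakashimaTakedaTsuruta}—and your proposal is a faithful reconstruction of exactly that cited argument (Lyndon roots with respect to the two opposite orders, the assignment to each run of a root position $\beta(r)>i$, and the injectivity analysis, which you carry out correctly, including the two-orders-at-one-position case and the period-one degeneracy). The only piece you leave at the level of a plan, the key lemma, is completed precisely as you anticipate: choose the order $\ell$ for which $t[j{+}1]\prec_\ell t[j{+}1{-}p]$ (the two letters differ by maximality of the run), so that deleting one period from any would-be longer $\ell$-Lyndon word starting at an internal root occurrence yields a proper suffix that is $\ell$-smaller at the first mismatch, which occurs exactly at position $j{+}1$, while candidates staying inside the run are already excluded by the border-freeness/primitivity argument you give.
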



The following lemma is a straightforward corollary of \cite[Lemma 1]{KolpakovPodolskiyPosypkinKhrapov}.

\begin{lemma}[{see~\cite{KolpakovPodolskiyPosypkinKhrapov}}]
For a fixed $d \ge 1$, any string of length $n$ contains $O(n)$ $d$-short runs. \label{AlmostRunsNum}
\end{lemma}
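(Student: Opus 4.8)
The plan is to reduce the bound on $d$-short runs to the known bound on runs by a standard ``pad and pump'' argument. Concretely, for a fixed $d$, I would take an arbitrary string $t$ of length $n$ and build from it a longer string $\hat t$ in which every $d$-short run of $t$ becomes a genuine run (of exponent $\ge 2$) of $\hat t$, while distinct $d$-short runs of $t$ map to distinct runs of $\hat t$, and the length of $\hat t$ is only a constant factor (depending on $d$) larger than $n$. Then Lemma~\ref{RunsNum} applied to $\hat t$ yields the $O(n)$ bound. The cleanest realization is to blow up each letter: replace $t[i]$ by a block of $d{+}1$ copies of a fresh symbol $c_i$, i.e. consider $\hat t = c_1^{d+1} c_2^{d+1}\cdots c_n^{d+1}$ over an enlarged alphabet, so $|\hat t| = (d{+}1)n$. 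Here one must be careful to use genuinely distinct symbols or, more economically, to note that a $d$-short run $t[i..j]=xyx$ with $|y|\le d$ and period $|x|$ scaled up by the factor $d{+}1$ turns $y$ into a block of length $(d{+}1)|y|$ and the two copies of $x$ into blocks of length $(d{+}1)|x|$, so the scaled substring has period $(d{+}1)|x|$ and length $2(d{+}1)|x|+(d{+}1)|y| \ge 2(d{+}1)|x|$, hence exponent $\ge 2$; extending it maximally gives a run of $\hat t$.

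The key steps, in order, would be: (1) fix the alphabet transformation and verify that periods and minimal periods scale exactly by $d{+}1$ under letter blow-up, using the elementary fact that $\hat t$ has period $p'$ iff $p'$ is a multiple of $d{+}1$ and $p'/(d{+}1)$ is a period of $t$ (this is where I would be slightly careful, since a priori a blown-up string could acquire small ``spurious'' periods — but with each block consisting of a single repeated letter and different letters at block boundaries this does not happen); (2) show that if $t[i..j]$ is a $d$-short run with minimal period $|x|$, then the corresponding substring of $\hat t$ has minimal period $(d{+}1)|x|$ and exponent $\ge 2$, and that maximally extending it within $\hat t$ produces a run $R$ of $\hat t$ whose ``core'' still covers the image of $[i,j]$; (3) check that the map from $d$-short runs of $t$ to runs of $\hat t$ obtained this way is injective — two distinct $d$-short runs occupy distinct position ranges in $t$, and because extension in $\hat t$ is controlled (a run of $\hat t$ with period a multiple of $d{+}1$ corresponds back to a unique maximal periodic stretch of $t$), their images are distinct; (4) conclude that the number of $d$-short runs of $t$ is at most the number of runs of $\hat t$, which is less than $|\hat t| = (d{+}1)n = O(n)$ by Lemma~\ref{RunsNum}.

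An alternative, and perhaps the route actually intended given the citation to \cite[Lemma~1]{KolpakovPodolskiyPosypkinKhrapov}, is to argue more directly: a $d$-short run $t[i..j]=xyx$ either already has exponent $\ge 2$ (in which case it is a run and there are $<n$ of these) or has $|x| < |y| \le d$, so $|x| < d$ and the run is ``short'' in the literal sense $|t[i..j]| < 3d$; then one bounds the number of short-period short runs by charging each to its starting position and observing that at a fixed starting position there can be only $O(d)=O(1)$ of them (periods range over $1,\dots,d$ and for each period the maximal such substring is unique). Summing over $n$ starting positions gives $O(n)$. I would likely present this second argument as the main proof since it is self-contained modulo the cited lemma, and mention the blow-up reduction only if a cleaner statement is desired.

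The main obstacle in either approach is the same subtle point: controlling \emph{spurious periodicities} — in the blow-up argument one must rule out that $\hat t$ gains a minimal period smaller than $(d{+}1)\cdot(\text{min period of }t)$, and in the direct argument one must be sure that the constraint ``both one-letter extensions have strictly larger minimal period'' genuinely forces uniqueness of the maximal $xyx$-substring at a given starting position with a given period. Both are routine once set up correctly, but they are exactly where a careless proof would have a gap, so I would state the needed periodicity lemma explicitly before using it.
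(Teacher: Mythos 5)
There is a genuine gap, and it originates in a misreading of what a $d$-short run is. Despite the wording of the definition, the minimal period of a $d$-short run $t[i..j]=xyx$ is $|xy|$, not $|x|$: the paper's own example shows this ($aba$ is a $1$-short run and its minimal period is $2$), as does the use in Step~2, where the $d$-short runs of $t'$ have minimal period $q$ with $|u| < 2q \le |u|+d$. Consequently a $d$-short run always has exponent strictly between $1$ and $2$ --- that is precisely why it is not a run --- the gap $y$ is short ($|y|\le d$), but the arm $x$ can be arbitrarily long. This breaks both of your routes. The blow-up route fails because replacing every letter by $d{+}1$ copies scales the length and the minimal period by the same factor, so the exponent is unchanged and stays below $2$; your claim that $\hat x\hat y\hat x$ has period $(d{+}1)|x|$ is false (take $x=ab$, $y=c$, $d{+}1=2$: the image $aabbccaabb$ has minimal period $6=(d{+}1)|xy|$, and $4$ is not a period), so the image of a $d$-short run is not a run of $\hat t$ and Lemma~\ref{RunsNum} cannot be invoked. (The remark about ``different letters at block boundaries'' also fails whenever $t[i]=t[i{+}1]$, while genuinely position-fresh symbols destroy all repetition structure of $t$.)

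The direct route fails for the same reason: the dichotomy ``either exponent $\ge 2$, or $|x|<|y|\le d$ and hence $|t[i..j]|<3d$'' is false. For instance, $a^kba^k$ (with $x=a^k$, $y=b$) is a $1$-short run of length $2k{+}1$, exponent $(2k{+}1)/(k{+}1)<2$, and minimal period $k{+}1$; such $d$-short runs have unbounded length and unbounded period, so they are not covered by charging each starting position with $O(d)$ candidates, and no trivial per-position bound is available for them. Bounding the number of exactly these long, low-exponent maximal gapped repeats with gap at most $d$ is the nontrivial content of the cited result: the paper gives no proof of Lemma~\ref{AlmostRunsNum} beyond invoking \cite[Lemma~1]{KolpakovPodolskiyPosypkinKhrapov} as a black box, and a self-contained argument would have to reproduce that combinatorial analysis rather than reduce the statement to Lemma~\ref{RunsNum}.
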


We also need a classical property of periodic strings.

\begin{lemma}[see~\cite{FineWilf}]
Suppose a string $w$ has periods $p$ and $q$ such that $p + q - \gcd(p,q) \le |w|$; then $\gcd(p,q)$ is a period of $w$.
\label{FineWilfLemma}
\end{lemma}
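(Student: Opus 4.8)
The plan is to prove this classical statement --- the Fine--Wilf theorem --- by folding $w$ along the shorter period and then reading periodicity off a cyclic structure induced by the longer one. Write $m=|w|$ and assume without loss of generality $p\ge q$. If $p=q$ then $\gcd(p,q)=p$ is already a period of $w$, and if $q$ divides $p$ then $\gcd(p,q)=q$ is already a period; so I may assume $p>q>d:=\gcd(p,q)\ge 1$, and then $m\ge p+q-d>p$, so $m-p\ge 1$.

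Since $q$ is a period, iterating $w[i]=w[i+q]$ shows that $w$ is determined by its prefix $w_0:=w[1..q]$ via $w[i]=w_0[\langle i\rangle]$, where $\langle i\rangle\in\overline{1,q}$ denotes the representative of $i$ modulo $q$. Put $\rho:=\langle p\rangle\in\overline{1,q}$; since $d<q$ forbids $q\mid p$, in fact $\rho\in\overline{1,q-1}$, and $\gcd(\rho,q)=\gcd(p,q)=d$. The period-$p$ relation $w[i]=w[i+p]$, valid for $1\le i\le m-p$, folds to
\[
w_0[x]=w_0[\langle x+\rho\rangle]\qquad\text{for every }x\in\overline{1,q}\text{ with }x\le m-p;
\]
hence this equality can fail only on the block $M:=\overline{m-p+1,\,q}$ (empty if $m-p\ge q$), which has at most $d$ elements and consists of consecutive integers.

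Now consider the permutation $\varphi\colon x\mapsto\langle x+\rho\rangle$ of $\overline{1,q}$. As $d=\gcd(\rho,q)$, the multiples of $\rho$ modulo $q$ are exactly the multiples of $d$, so the $\varphi$-orbit of any $x$ is precisely the residue class of $x$ modulo $d$; thus $\varphi$ has $d$ orbits, each of size $q/d\ge 2$. Fix one orbit and enumerate it along $\varphi$ as a cycle $c_0\to c_1\to\cdots\to c_{q/d-1}\to c_0$. The displayed identity gives $w_0[c_j]=w_0[c_{j+1}]$ whenever $c_j\notin M$; and since the at most $d$ elements of $M$ are consecutive, they have pairwise distinct residues modulo $d$, so at most one of the $c_j$ lies in $M$. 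A cyclic chain of equalities broken in at most one place still forces all its values to coincide, so $w_0$ is constant on each residue class modulo $d$. Because $d\mid q$ we have $\langle i+d\rangle\equiv\langle i\rangle\pmod{d}$, whence $w[i+d]=w_0[\langle i+d\rangle]=w_0[\langle i\rangle]=w[i]$ for every $1\le i\le m-d$; since $0<d<m$, this exhibits $d=\gcd(p,q)$ as a period of $w$.

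The crux is the counting in the last paragraph. Under the weaker hypothesis $p+q\le|w|$ the block $M$ is empty, and one then even has a routine Euclidean-style induction ($w$ has period $p-q$, then recurse on the pair $(p-q,q)$); the force of the sharp bound $p+q-\gcd(p,q)\le|w|$ is that it leaves up to $d$ positions uncontrolled, and the argument survives only because those positions are \emph{consecutive} --- so that they scatter one per orbit of $\varphi$ rather than piling up in a single orbit, which would disconnect it. The two folding steps and the orbit computation are routine.
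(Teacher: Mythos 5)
The paper does not prove this lemma at all: it is imported as a classical result with a citation to Fine and Wilf, and the original paper of Fine and Wilf argues via formal power series. Your proposal is therefore a genuinely self-contained alternative, and after checking the details I find it correct. The reduction to the prefix $w_0=w[1..q]$ via the period $q$, the folding of the period-$p$ relation to $w_0[x]=w_0[\langle x+\rho\rangle]$ for $x\le m-p$, and the identification of the exceptional block $M=\overline{m-p+1,q}$ with $|M|\le q+p-m\le d$ are all sound; the key counting point --- that $M$, being a block of at most $d$ \emph{consecutive} positions, meets each orbit of $x\mapsto\langle x+\rho\rangle$ (i.e.\ each residue class modulo $d$, of size $q/d\ge 2$ since $q>d$ in the nontrivial case) in at most one element, so each cyclic chain of equalities is broken in at most one place and hence collapses --- is exactly what makes the sharp bound $p+q-\gcd(p,q)\le|w|$ suffice, and you state it accurately. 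The final step $w[i]=w_0[\langle i\rangle]$, together with $d\mid q$, correctly converts constancy of $w_0$ on residue classes modulo $d$ into $w[i]=w[i+d]$ for $1\le i\le m-d$, and the degenerate cases $p=q$ and $q\mid p$ are disposed of at the start. Compared with the power-series proof behind the paper's citation, your orbit argument is more elementary and makes the tightness of the hypothesis transparent (one uncontrolled position per orbit is the most the argument can absorb), at the cost of some bookkeeping with representatives modulo $q$; either route fully justifies the lemma as used in the paper.
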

\begin{lemma}
Let $t_1$ and $t_2$ be substrings with the periods $p_1$ and $p_2$ respectively. Suppose $t_1$ and $t_2$ have a common substring of the length $p_1 + p_2 - \gcd(p_1, p_2)$ or greater; then $t_1$ and $t_2$ have the period $\gcd(p_1, p_2)$. \label{StringsIntersect}
\end{lemma}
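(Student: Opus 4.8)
The plan is to reduce Lemma~\ref{StringsIntersect} to the Fine--Wilf theorem (Lemma~\ref{FineWilfLemma}) applied to the common substring. First I would let $w$ be a common substring of $t_1$ and $t_2$ of length at least $p_1 + p_2 - \gcd(p_1,p_2)$; say $w = t_1[i_1..j_1] = t_2[i_2..j_2]$. The first observation is that $p_1$ is a period of $w$: since $w$ is a factor of $t_1$ and $p_1$ is a period of $t_1$, the defining equalities $t_1[k] = t_1[k+p_1]$ restricted to the index range of $w$ give $w[k] = w[k+p_1]$ for all valid $k$ (one should note that $p_1 < |w|$ whenever $p_1$ exceeds the threshold trivially, or handle the degenerate case $p_1 \ge |w|$ separately — in that case $p_1$ need not be a ``period'' of $w$ in the strict sense $0 < p < |w|$, but then $p_2 - \gcd(p_1,p_2) \ge |w| - p_1 \ge$ forces a similarly easy argument, or one simply observes the claimed conclusion may be interpreted vacuously). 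Symmetrically $p_2$ is a period of $w$.

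Next, since $|w| \ge p_1 + p_2 - \gcd(p_1,p_2)$, Lemma~\ref{FineWilfLemma} applies to $w$ with the two periods $p_1$ and $p_2$, yielding that $g := \gcd(p_1,p_2)$ is a period of $w$. The remaining step is to propagate this period from $w$ back out to all of $t_1$ and $t_2$. For $t_1$: I know $p_1$ is a period of $t_1$ and $g \mid p_1$, and $g$ is a period of the factor $w$ of $t_1$; a standard argument shows that if a string has period $p_1$ and one of its factors of length $\ge p_1$ (here $|w| \ge p_1 + p_2 - \gcd(p_1,p_2) \ge p_1$, using $p_2 \ge \gcd(p_1,p_2)$) has period $g$ with $g \mid p_1$, then the whole string has period $g$. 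Concretely, for any position $k$ in $t_1$ one walks by steps of $\pm p_1$ to land inside the window of $w$, where consecutive positions differing by $g$ are known equal, and then translates back; the condition $|w| \ge p_1$ guarantees every residue class modulo $p_1$ meets the window, so equality of $t_1[k]$ and $t_1[k+g]$ follows. The same reasoning gives that $g$ is a period of $t_2$.

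The main obstacle, such as it is, is purely bookkeeping: being careful about the boundary/degenerate cases where $p_1 \ge |w|$ or $p_2 \ge |w|$ (so that the phrase ``period of $w$'' must be read with the convention $0 < p < |w|$ in mind), and making the ``propagate a divisor period through an overlap'' step precise. Neither is deep; the content of the lemma is entirely carried by Fine--Wilf, and the surrounding steps are routine transfers of periodicity between a string and its factors. I would therefore present the proof as: (i) restrict periods to the common factor $w$; (ii) invoke Lemma~\ref{FineWilfLemma} on $w$; (iii) extend the period $\gcd(p_1,p_2)$ from $w$ to $t_1$ and to $t_2$ using $|w| \ge \max(p_1,p_2)$.
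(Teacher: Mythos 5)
Your proof is correct and takes essentially the same route as the paper, whose entire proof is the single line ``It is immediate from Lemma~\ref{FineWilfLemma}''; you merely spell out the routine steps (restricting the periods to the common factor, applying Fine--Wilf there, and propagating $\gcd(p_1,p_2)$ back to $t_1$ and $t_2$ via the fact that the overlap has length at least $\max(p_1,p_2)$) that the paper leaves implicit. The only blemish is the garbled remark on the degenerate case: if $p_1 \ge |w|$ then $|w| \ge p_1 + p_2 - \gcd(p_1,p_2)$ forces $p_2 = \gcd(p_1,p_2)$, after which the conclusion follows easily rather than ``vacuously'', but this does not affect correctness.
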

\begin{proof}
It is immediate from Lemma~\ref{FineWilfLemma}.
\end{proof}

Unfortunately, in a string of length $n$ the sum of exponents of runs with the minimal period $p$ or greater is not equal to $O(\frac{n}{p})$ as the following example from \cite{Kolpakov} shows: $(01)^k(10)^k$. Indeed, for any $p < 2k$, the string $(01)^k(10)^k$ contains at least $k - \lfloor p/2\rfloor$ runs with the shortest period $p$ or greater: $1(01)^i(10)^i1$ for $i \in \overline{\lfloor p/2\rfloor,k{-}1}$. However, it turns out that this property holds for cubic runs.

\begin{lemma}
For any $p\ge 2$ and any string $t$ of length $n$, the sum of exponents of all cubic runs in $t$ with the minimal period $p$ or greater is less than $\frac{12n}{p}$. \label{CubicRunExp}
\end{lemma}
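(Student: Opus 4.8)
The plan is to bound the total exponent of long-period cubic runs by a counting argument over positions of the string, charging each cubic run to a block of positions proportional to its length, and then using Lemma~\ref{StringsIntersect} (Fine--Wilf) to control how much these blocks can overlap. The key observation is that, unlike the exponent-$2$ case, a cubic run $t[i..j]$ with minimal period $q\ge p$ has length $j-i+1\ge 3q$, so it ``spans'' at least $2q$ full periods; this extra slack is exactly what kills the bad example $(01)^k(10)^k$, where every long run has exponent only slightly above $2$.

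First I would partition cubic runs by the dyadic range of their minimal period: for $\ell=0,1,2,\ldots$, let $R_\ell$ be the set of cubic runs whose minimal period $q$ satisfies $2^\ell p \le q < 2^{\ell+1}p$. A run in $R_\ell$ has exponent $e = (j-i+1)/q$, and since $q \ge 2^\ell p$, its contribution to the sum is $e \le (j-i+1)/(2^\ell p)$. So it suffices to show that, for each fixed $\ell$, the total length $\sum (j-i+1)$ over runs in $R_\ell$ is $O(n)$ with a small explicit constant; summing the geometric series $\sum_\ell 2^{-\ell}$ then yields the factor $12$ (the constant $12$ being generous enough to absorb the arithmetic).

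To bound $\sum_{R_\ell}(j-i+1)$, I would show that two distinct cubic runs $t[i_1..j_1]$ and $t[i_2..j_2]$ in $R_\ell$ cannot overlap in a substring of length $\ge 2\cdot 2^{\ell+1}p > q_1+q_2-\gcd(q_1,q_2)$: if they did, Lemma~\ref{StringsIntersect} would give both the common period $g=\gcd(q_1,q_2)$, and since $g$ divides each $q_i$ and both runs are maximal (non-extendable) with that smaller common period available on the overlap, a standard argument forces $q_1=q_2$ and then $i_1=i_2$, $j_1=j_2$, contradicting distinctness. Hence each position of $t$ lies in at most $O(1)$ runs of $R_\ell$ once we discount a boundary region of size $O(2^\ell p)$ per run; more carefully, assigning to each run in $R_\ell$ the left portion of length, say, $(j-i+1) - 4\cdot2^{\ell+1}p \ge (j-i+1)/2$ (valid because $j-i+1\ge 3q\ge 3\cdot2^\ell p$, so for the bookkeeping to close one simply tracks the constants), these portions are pairwise disjoint, giving $\sum_{R_\ell}(j-i+1)\le 2n$ plus lower-order terms.

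The main obstacle I anticipate is the overlap-to-period-equality step: turning ``large common substring $\Rightarrow$ common small period $g$'' into ``$q_1=q_2$ and the runs coincide'' needs the maximality clause in the definition of a run, and one must check the boundary cases where the overlap sits at the very end of one run and the very start of the other, so that the length $p_1+p_2-\gcd(p_1,p_2)$ threshold of Lemma~\ref{StringsIntersect} is actually met. Choosing the dyadic bucketing (rather than a single threshold $p$) is what makes this clean, since within a bucket all periods are within a factor $2$ of each other and the Fine--Wilf threshold is uniformly $O(2^\ell p)$. Once that geometric step is in hand, the rest is the routine summation sketched above, and the loose constant $12$ leaves ample room for the additive boundary losses.
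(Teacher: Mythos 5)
Your overall strategy (bucket the runs by period range, use Lemma~\ref{StringsIntersect} to limit overlaps of runs in the same bucket, charge each run to a disjoint block of positions, sum a geometric series) is the same as the paper's, and your ``large overlap implies the runs coincide'' step is sound: a common substring of length $q_1+q_2-\gcd(q_1,q_2)$ gives both runs the period $\gcd(q_1,q_2)$, minimality of their periods forces $q_1=q_2=\gcd(q_1,q_2)$, and maximality then merges the two runs. The genuine gap is in the charging step, and it is caused precisely by your choice of dyadic buckets. Within a bucket $2^\ell p\le q<2^{\ell+1}p$ the overlap of two distinct cubic runs is bounded only by $q_1+q_2-\gcd(q_1,q_2)$, which can be nearly $3q_1$ when $q_2$ is close to $2q_1$; since a cubic run is only guaranteed length $3q_1$, the part of a run not overlapped by its neighbour can be arbitrarily small. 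Concretely, your claim that the left portion of length $(j-i+1)-4\cdot2^{\ell+1}p$ is at least $(j-i+1)/2$ fails for every run of length below $16\cdot 2^{\ell}p$, i.e., for most of the bucket (a cubic run there need only have length $3\cdot 2^{\ell}p$), and the deficit cannot be absorbed by ``tracking the constants'': to repair it you would need a bound of the form $O\bigl(n/(2^{\ell}p)\bigr)$ on the \emph{number} of runs in the bucket, and that count is exactly what the degenerate overlap estimate fails to give with factor-$2$ buckets, since consecutive starting positions are only forced apart by $2q_1-q_2$, which can be as small as $1$.

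The paper resolves this by taking narrower buckets, $2\delta\le q\le 3\delta$ with $\delta_i=(3/2)^i$: then $q_2\le 1.5\,q_1$, so the overlap is less than $2.5\,q_1$ while the length is at least $3q_1$, leaving a guaranteed gap of $(e(t_1)-2.5)\,q_1\ge 0.5\,q_1\ge\delta$ between consecutive starting positions. This one estimate yields both the count bound (fewer than $n/\delta$ runs per bucket) and $\sum(e(u)-2.5)\,2\delta\le n$, hence $\sum e(u)<3n/\delta$ per bucket, and summing the geometric series starting near $\delta\approx p/2$ gives $12n/p$. If you shrink your bucket ratio from $2$ to $3/2$ and bound exponents (count plus trimmed lengths) rather than raw lengths, your argument closes; as written, it does not.
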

\begin{proof}
Consider a string $t$ of length $n$. Denote by $\mathcal{R}$ the set of all cubic runs of $t$. Let $t_1 = t[i_1..j_1]$ and $t_2 = t[i_2..j_2]$ be distinct cubic runs such that $i_1 \le i_2$. For any string $u$, $e(u)$ denotes the exponent of $u$ and $p(u)$ denotes the minimal period of $u$. It follows from Lemma~\ref{StringsIntersect} that $t_1$ and $t_2$ cannot have a common substring of length $p(t_1) + p(t_2)$ or longer. Let $\delta$ be a positive integer. Suppose $2\delta \le p(t_1),p(t_2) \le 3\delta$; then either $j_1 < i_2$ or $j_1 - i_2 < p(t_1) + p(t_2) \le 2.5 p(t_1)$. The later easily implies $i_2 - i_1 > \delta$ and therefore $\rho = |\{u \in \mathcal{R}\colon 2\delta \le p(u) \le 3\delta\}| < \frac{n}{\delta}$. Moreover, we have $i_2 - i_1 \ge (e(t_1) - 2.5)p(t_1) \ge (e(t_1) - 2.5)2\delta$. Hence $\sum\limits_{u \in \mathcal{R}, 2\delta \le p(u) \le 3\delta} (e(u) - 2.5)2\delta \le n$ and then $\sum\limits_{u \in \mathcal{R}, 2\delta \le p(u) \le 3\delta} e(u) \le \frac{n}{2\delta} + 2.5\rho < \frac{3n}{\delta}$.

Denote $\delta_i = (\frac{3}{2})^i$ and $k = \lfloor \log_{\frac{3}{2}} \frac{p}{2}\rfloor$. Evidently $(\frac{2}{3})^k \ge \frac{4}{3p}$. Finally, we obtain $\sum\limits_{u \in \mathcal{R}, p(u) \ge p} e(u) < \sum_{i = k}^\infty \frac{3n}{\delta_i} = \sum_{i = k}^\infty 3n(\frac{2}{3})^i = 3n \frac{(2/3)^k}{1/3} \le 9n\frac{4}{3p}  = \frac{12n}{p}$.
\end{proof}

\section{Linear Decision Tree Algorithm Finding All Runs} \label{SectLin}

We say that a decision tree processing strings of length $n$ \emph{finds all runs with a given property $P$} if for each distinct strings $t_1$ and $t_2$ such that $|t_1| = |t_2| = n$ and $t_1$ and $t_2$ reach the same leaf of the tree, the substring $t_1[i..j]$ is a run satisfying $P$ iff $t_2[i..j]$ is a run satisfying $P$ for all $i, j \in \overline{1,n}$.

We say that two decision trees processing strings of length $n$ are equivalent if for each reachable leaf $a$ of the first tree, there is a leaf $b$ of the second tree such that for any string $t$ of length $n$, $t$ reaches $a$ iff $t$ reaches $b$. The \emph{basic height} of a decision tree is the minimal number $k$ such that each path connecting the root and a leaf of the tree has at most $k$ edges labeled with the symbols ``$<$'' and ``$>$''.

For a given positive integer $p$, we say that a run $r$ of a string is \emph{$p$-periodic} if $2p \le |r|$ and $p$ is a (not necessarily minimal) period of $r$. We say that a run is a \emph{$p$-run} if it is $q$-periodic for some $q$ which is a multiple of $p$. Note that any run is $1$-run.

\begin{example}
Let us describe a ``naive'' decision tree finding all $p$-runs in strings of length $n$. Denote by $t$ the input string. Our tree simply compares $t[i]$ and $t[j]$ for all $i, j \in \overline{1,n}$ such that $|i - j|$ is a multiple of $p$. The tree has the height $\sum_{i=1}^{\lfloor n/p\rfloor} (n - ip) = O(n^2/p)$ and the same basic height.
\end{example}

Note that a decision tree algorithm finding runs doesn't report runs in the same way as RAM algorithms do. The algorithm only collects sufficient information to conclude where the runs are; once its knowledge of the structure of the input string becomes sufficient to find all runs without further comparisons of symbols, the algorithm stops and doesn't care about the processing of obtained information. To simplify the construction of an efficient decision tree, we use the following lemma that enables us to estimate only the basic height of our tree.
\begin{lemma}
Suppose a decision tree processing strings of length $n$ has basic height~$k$. Then it is equivalent to a decision tree of height $\le k + n$. \label{EqualComp}
\end{lemma}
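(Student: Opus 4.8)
The plan is to prune the tree by deleting comparisons whose outcome is already determined, and then to bound the height of what remains by counting strict edges and equality edges separately.

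Call an interior vertex $v$ of a decision tree \emph{redundant} if only one of its three outgoing edges can be traversed by a string reaching $v$; equivalently, the outcome of the comparison at $v$ is forced by the comparisons already made on the path from the root to $v$. First I would process the given tree $T$ in a single top-down sweep: whenever a redundant vertex $v$ is met, contract it, i.e., delete $v$ together with the two subtrees hanging from its unreachable edges and reconnect the incoming edge of $v$ directly to the root of the subtree of its unique reachable child. The discarded material is unreachable, so the set of strings reaching each reachable leaf is unchanged and the tree stays equivalent to $T$ after every contraction; since $T$ is finite the sweep terminates and produces an equivalent tree $T'$. Because a forced comparison contributes no information about the order of the letters, deleting it leaves the order information available at every surviving descendant exactly as before; hence a single top-down sweep suffices and $T'$ contains no redundant vertices.

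It remains to bound the height of $T'$. Fix a root-to-leaf path $\pi$ in $T'$. Reinserting the contracted vertices along $\pi$, each through its forced edge, yields a root-to-leaf path in $T$ whose set of edges contains all edges of $\pi$; in particular $\pi$ has at most as many edges labeled ``$<$'' or ``$>$'' as that path in $T$, that is, at most $k$ of them, since $T$ has basic height $k$. For the edges of $\pi$ labeled ``$=$'', I would track the equivalence relation $\sim$ on $\{1,\dots,n\}$ generated by the pairs compared with outcome ``$=$'' along $\pi$, starting from the $n$ singleton classes. Whenever $\pi$ leaves a vertex labeled $(i,j)$ along its ``$=$'' edge we must have $i \not\sim j$ beforehand — otherwise $t[i]=t[j]$ would already be forced and the vertex would be redundant, contradicting the absence of redundant vertices in $T'$ — so this step merges two distinct classes and lowers the number of classes by at least one. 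Therefore $\pi$ carries at most $n-1$ edges labeled ``$=$'', hence at most $k + (n-1) < k + n$ edges in total; as $\pi$ was arbitrary, $T'$ has height at most $k + n$.

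The only delicate point is to see that the top-down sweep really removes all redundant vertices: a vertex kept by the sweep is inspected against the comparisons lying above it at that moment, and since every further contraction takes place strictly below it, those comparisons are, up to deletion of forced ones (which changes no order information), its final set of ancestral comparisons; hence nothing redundant survives. Everything else — that contraction preserves equivalence in the sense of the paper, and that reinserting forced edges along a path of $T'$ produces a genuine root-to-leaf path of $T$ — is routine bookkeeping.
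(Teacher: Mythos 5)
Your overall strategy coincides with the paper's: discard comparisons whose outcome is already determined, then bound the strict edges by $k$ (every surviving path embeds into a root-to-leaf path of the original tree, and basic height bounds its ``$<$''/``$>$'' edges) and the ``$=$'' edges by $n-1$ (each genuinely new equality merges two classes of positions). Your equivalence-relation bookkeeping for the equality edges is just an explicit rephrasing of the paper's transitivity observation, and your remark that contractions above a vertex do not change its set of reaching strings (forced comparisons carry no information) is the same justification the paper uses for the contraction step.

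There is, however, one concrete gap: the treatment of unreachable vertices. The paper's proof \emph{first removes all unreachable vertices} and only then contracts non-branching paths; your sweep deletes dead subtrees only when they hang below a \emph{redundant} vertex, i.e.\ one with a unique traversable outgoing edge. But a reachable vertex can have exactly two traversable outcomes when the alphabet is small: over the binary alphabet $a<b$, after an edge recording $t[1]<t[2]$ the letters $t[1]=a$, $t[2]=b$ are forced, so a later comparison of $(1,3)$ can only result in ``$=$'' or ``$<$''; its ``$>$'' child is unreachable, yet the vertex is not redundant, so your sweep keeps the entire dead subtree. Inside such a subtree no vertex is ever redundant in your sense (no string reaches it, hence there is no ``unique reachable child'' to splice to), so your counting argument, which derives its contradiction from redundancy, says nothing about those paths; but the height of a decision tree is measured over \emph{all} root-to-leaf paths, and the retained dead subtree may contain arbitrarily many ``$=$'' comparisons while adding almost nothing to the basic height, so the tree $T'$ you build can have height far exceeding $k+n$. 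The lemma must cover small alphabets (the runs algorithm is applied, e.g., to binary strings), so this case cannot be dismissed. The repair is exactly the paper's first step: prune every vertex reached by no string (replacing each dead branch by a leaf) before or during the contraction; afterwards every surviving vertex is reachable, ``non-redundant'' coincides with ``has at least two reachable children,'' and the rest of your argument goes through verbatim.
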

\begin{proof}
To construct the required decision tree of height $\le k+n$, we modify the given decision tree of basic height $k$. First, we remove all unreachable vertices of this tree. After this, we contract each non-branching path into a single edge, removing all intermediate vertices and their outgoing edges. Indeed, the result of a comparison corresponding to such an intermediate vertex is determined by the previous comparisons. So, it is straightforward that the result tree is equivalent to the original tree. Now it suffices to prove that there are at most $n{-}1$ edges labeled with the symbol ``$=$'' along any path connecting the root and some leaf.

Observe that if we perform $n{-}1$ comparisons on $n$ elements and each comparison yields an equality, then either all elements are equal or the result of at least one comparison can be deduced by transitivity from other comparisons. Suppose a path connecting the root and some leaf has at least $n$ edges labeled with the symbol ``$=$''. By the above observation, the path contains an edge labeled with ``$=$'' leaving a vertex labeled with $(i,j)$ such that the equality of the $i$th and the $j$th letters of the input string follows by transitivity from the comparisons made earlier along this path. Then this vertex has only one reachable child. But this is impossible because all such vertices of the original tree were removed during the contraction step. This contradiction finishes the proof.
\end{proof}

\begin{lemma}
For any integers $n$ and $p$, there is a decision tree that finds all $p$-periodic runs in strings of length $n$ and has basic height at most $2\lceil n/p\rceil$.\label{ConstRun}
\end{lemma}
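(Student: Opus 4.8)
The plan is to reduce the problem to identifying certain maximal runs in a single auxiliary binary sequence, and then to give an adaptive probing strategy whose strict outcomes can be charged to the $\lceil n/p\rceil$ length-$p$ blocks of the input.

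Put $N=n-p$, and for an input string $t$ let $E=\{\,i\in\overline{1,N}:t[i]=t[i+p]\,\}$, a subset of $\overline{1,N}$ depending on $t$. The first step is the structural fact that the $p$-periodic runs of $t$ are exactly the substrings $t[a..c+p]$, where $[a,c]$ ranges over the inclusion-maximal subintervals of $\overline{1,N}$ that are contained in $E$ and have at least $p$ elements. One direction is immediate: if $[a,c]\subseteq E$ then $t[a..c+p]$ has period $p$, and $|[a,c]|\ge p$ forces $|t[a..c+p]|\ge 2p$. For the other direction, and for the equivalence between maximality of $[a,c]$ and non-extendability of the run, I would use Lemma~\ref{FineWilfLemma}: a $p$-periodic run $r=t[a..b]$ has minimal period $p_0$ dividing $p$ (apply Lemma~\ref{FineWilfLemma} to $r$ with the periods $p_0$ and $p$, using $p_0+p-\gcd(p_0,p)\le 2p\le|r|$, to get that $\gcd(p_0,p)$, hence $p_0$, divides $p$); then, since $r$ has period $p_0$ and $p_0\mid p$, we have $t[a-1+p_0]=t[a-1+p]$, so the strict inequality $t[a-1]\ne t[a-1+p_0]$ witnessing non-extendability of $r$ to the left is precisely the statement $a-1\notin E$, and symmetrically at the right end. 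Hence finding all $p$-periodic runs is the same as identifying all inclusion-maximal subintervals of $E$ having at least $p$ elements, and this depends only on the outcomes of comparisons of the form ``$t[i]$ against $t[i+p]$''.

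The tree has two phases. In Phase~1, for each multiple $kp$ with $1\le k\le q:=\lfloor N/p\rfloor$ we compare $t[kp]$ with $t[kp+p]$, calling $kp$ a \emph{good anchor} when equality holds; since every $p$ consecutive integers contain a multiple of $p$, each maximal $\ge p$-element subinterval of $E$ contains a good anchor. Phase~1 contributes at most $Z_1$ strict edges along any path, where $Z_1$ is the number of $k$ with $t[kp]\ne t[kp+p]$. In Phase~2 we go through the good anchors left to right; whenever a good anchor $kp$ is not yet contained in an already-reconstructed interval, we walk leftward and rightward from $kp$, performing comparisons ``$t[i]$ against $t[i+p]$'', skipping positions that are multiples of $p$ whose status is already known, and stopping in each direction at the first position found to lie outside $E$ (or at an already-known such multiple of $p$, or at an endpoint of $\overline{1,N}$). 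This recovers the entire maximal subinterval $[a,c]$ of $E$ containing $kp$; we record it and mark every good anchor inside it. Every comparison made while reconstructing one such interval returns equality except the single stopping comparison in each direction, so one reconstruction contributes at most $2$ strict edges, and different reconstructions yield different, hence disjoint, maximal subintervals of $E$.

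Finally the count: there are $q-Z_1$ good anchors, each reconstructed interval contains at least one of them, and these intervals are pairwise disjoint, so Phase~2 makes at most $q-Z_1$ reconstructions, hence at most $2(q-Z_1)$ strict edges. Therefore along every root-to-leaf path the number of strict edges is at most $Z_1+2(q-Z_1)=2q-Z_1\le 2q=2\lfloor(n-p)/p\rfloor<2\lceil n/p\rceil$, which bounds the basic height; the degenerate case $n<2p$, where there is no $p$-periodic run and a single-leaf tree works, is immediate. By construction the tree determines the set of inclusion-maximal $\ge p$-element subintervals of $E$, hence, by the structural fact, the set of $p$-periodic runs, so it finds all $p$-periodic runs. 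I expect the main obstacle to be the structural fact, specifically turning the non-extendability condition on a run into the clean conditions $a-1\notin E$ and $c+1\notin E$: this is where Lemma~\ref{FineWilfLemma} and the divisibility $p_0\mid p$ are essential, whereas the two-phase probing and the $2q-Z_1$ accounting are routine.
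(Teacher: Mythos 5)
Your proposal is correct, but it takes a genuinely different route from the paper's proof. Both constructions restrict themselves to comparisons of the form $t[i]$ versus $t[i{+}p]$, but the paper uses a single left-to-right scan: upon finding $t[i]\ne t[i{+}p]$ it jumps ahead by $p$ positions and then scans backwards from the new position until the first mismatch, charging at most two inequality comparisons to each jump of length $p$; its correctness argument is operational, showing that for every $p$-periodic run $t[j..k]$ the scan performs all comparisons $t[l]$ versus $t[l{+}p]$ with $l\in\overline{j{-}1,k{-}p{+}1}$. You instead probe the anchors $t[kp]$ versus $t[kp{+}p]$ at all multiples of $p$ and then grow, around each good anchor, the maximal equality interval of your set $E$, charging one inequality to each bad anchor and at most two to each reconstructed interval; this gives the same $2\lceil n/p\rceil$ bound (with room to spare) via a different accounting. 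The more substantial difference is that you make the correctness criterion explicit: using Lemma~\ref{FineWilfLemma} and the resulting divisibility of $p$ by the minimal period, you prove that the $p$-periodic runs are exactly the substrings $t[a..c{+}p]$ over inclusion-maximal subintervals $[a,c]\subseteq E$ with at least $p$ elements, so the outcomes recorded along a root-to-leaf path determine the set of $p$-periodic runs --- which is precisely what the decision-tree notion of ``finds all $p$-periodic runs'' demands. The paper leaves this correspondence largely implicit (it asserts which comparisons are needed to ``discover'' a run and verifies that they are made), so your argument is longer but more self-contained on the correctness side, while the paper's one-pass scan is shorter to state and has a simpler charging scheme.
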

\begin{proof}
Denote by $t$ the input string. The algorithm is as follows (note that the resulting decision tree contains only comparisons of letters of $t$):
\begin{enumerate}
\item assign $i \gets 1$;
\item if $t[i] \ne t[i{+}p]$, then assign $i \gets i + p$, $h \gets \min\{i, n - p\}$ and for $i' = h{-}1, h{-}2, \ldots$, compare $t[i']$ and $t[i'{+}p]$ until $t[i'] \ne t[i'{+}p]$; \label{lst:pnt}
\item increment $i$ and if $i \le n - p$, jump to line \ref{lst:pnt}.
\end{enumerate}
Obviously, the algorithm performs at most $2\lceil n/p\rceil$ symbol comparisons yielding inequalities. Let us prove that the algorithm finds all $p$-periodic runs.

Let $t[j..k]$ be a $p$-periodic run. For the sake of simplicity, suppose $1 < j < k < n$. To discover this run, one must compare $t[l]$ and $t[l{+}p]$ for each $l\in \overline{j{-}1,k{-}p{+}1}$. Let us show that the algorithm performs all these comparisons. Suppose, to the contrary, for some $l \in \overline{j{-}1,k{-}p{+}1}$, the algorithm doesn't compare $t[l]$ and $t[l{+}p]$. Then for some $i_0$ such that $i_0 < l < i_0 + p$, the algorithm detects that $t[i_0] \ne t[i_0{+}p]$ and ``jumps'' over $l$ by assigning $i = i_0 + p$ at line~\ref{lst:pnt}. Obviously $i_0 < j$. Then $h = \min\{i_0 + p, n - p\} < k$ and hence for each $i' = h{-}1, h{-}2, \ldots, j{-}1$, the algorithm compares $t[i']$ and $t[i'{+}p]$. Since $j - 1 \le l < i_0 + p$, $t[l]$ and $t[l{+}p]$ are compared, contradicting to our assumption.
\end{proof}

\newcommand{\sgn}{\operatorname{sgn}}
\begin{theorem}
There is a constant $c$ such that for any integer $n$, there exists a decision tree of height at most $cn$ that finds all runs in strings of length $n$.
\end{theorem}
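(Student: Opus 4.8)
The plan is to build, for every $n$, a decision tree of height $cn$ that finds all runs. By Lemma~\ref{EqualComp} it is enough to build one of \emph{basic} height $O(n)$: only the comparisons returning ``$<$'' or ``$>$'' need to be counted, the ``$=$''-comparisons being absorbed by the conversion of Lemma~\ref{EqualComp}. The driving observation is that, although a string may contain as many as $n-1$ runs, the family of runs has linear total ``weight'': along any run all period equalities $t[l]=t[l+p]$ are free, so certifying a run essentially costs only its two boundary inequalities plus whatever is needed to pin down the minimality of its period; and the available budgets are all $O(n)$, namely fewer than $n$ runs (Lemma~\ref{RunsNum}), $O(n)$ $d$-short runs for each fixed $d$ (Lemma~\ref{AlmostRunsNum}), and $O(n)$ total exponent over all cubic runs (Lemma~\ref{CubicRunExp} with $p=2$).

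Concretely, I would process the runs scale by scale. Runs of minimal period at most a suitable constant $c_0$ are handled outright by concatenating the trees of Lemma~\ref{ConstRun} for $p=1,\ldots,c_0$, of total basic height $\sum_{p\le c_0}2\lceil n/p\rceil=O(n)$. For a scale $[2^k,2^{k+1})$ with $2^k>c_0$ one cannot afford to invoke Lemma~\ref{ConstRun} separately for each of the $\Theta(2^k)$ periods in the range (that costs $\Theta(n)$ per scale and $\Theta(n\log n)$ in total --- exactly the bound to be beaten), so the cubic runs and the short runs (exponent in $[2,3)$) of the scale are treated differently. For the cubic runs one performs a single ConstRun-style left-to-right sweep over a coarse arithmetic progression of offsets, chosen dense enough (relative to $2^k$) that the sweep must enter every cubic run of that scale; once inside such a run one ``locks on'' to its exact period by a local search whose cost is charged against the run's exponent, and $\sum_k$ of all such charges is $O(n)$ by Lemma~\ref{CubicRunExp}. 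For the short runs one first refines the scale by factors $3/2$ so that the exponent is essentially determined, notes that these runs are then $d$-short for a bounded $d$ and hence number $O(n/2^k)$ per sub-scale (a counting argument of the flavour behind Lemma~\ref{AlmostRunsNum}), and certifies each with $O(1)$ inequalities; the geometric sum of the $O(n/2^k)$ terms over all scales is again $O(n)$.

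The remaining point is correctness: one must check that the comparisons made actually determine, for every interval $(i,j)$, whether $t[i..j]$ is a run. The equalities recorded along a detected run certify its period; its two boundary inequalities certify non-extendability; and the minimality of its period $p\in[2^k,2^{k+1})$ must be certified by distance-$(p/q)$ mismatches inside the run for the prime divisors $q$ of $p$. Since each such $p/q$ is smaller than $2^k$, it was already treated at an earlier scale, so with Lemma~\ref{StringsIntersect} one argues that the earlier sweeps must have recorded suitable mismatches in (or right beside) the run --- and also that two runs whose periods are not comparable cannot be assigned to the same interval by two strings reaching the same leaf. The technical heart, and the step I expect to be the real obstacle, is exactly the scale-$2^k$ sweep for the cubic runs together with the accompanying sub-bucketing for the short runs: tuning the offset sampling and the ``lock-on'' so that the number of ``$<$''/``$>$'' comparisons summed over all $O(\log n)$ scales telescopes to $O(n)$, while still reaching every run of every period and leaving behind exactly the mismatches needed for the minimality certification above. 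Given the sweep, the reduction to basic height and the per-run and per-scale accounting are routine.
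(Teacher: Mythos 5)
Your plan correctly identifies the outer shell of the argument (reduce to basic height via Lemma~\ref{EqualComp}, handle constant periods by Lemma~\ref{ConstRun}, and pay for everything against the budgets of Lemmas~\ref{RunsNum}, \ref{AlmostRunsNum} and~\ref{CubicRunExp}), but the piece you defer as ``the technical heart'' --- the scale-$2^k$ sweep that is supposed to locate every run of large period and ``lock on'' to its exact period within an $O(n)$ total budget --- is precisely where all of the work lies, and no mechanism for it is given. A ConstRun-style sweep at one sampled offset $p_0$ per scale only yields mismatches at distance $p_0$; inside a run whose (unknown) period $p$ lies in the same scale but differs from $p_0$, those mismatches carry no obvious information about $p$, and nothing in the proposal explains how the decision tree determines, for \emph{all} strings reaching the same leaf, where the large-period runs are without spending comparisons proportional to the number of candidate periods. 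The paper resolves exactly this by a different device: from the block-mismatch data of its Step~1 it builds (without further comparisons) a derived string $t'$ of length $n/p$, proves that every $p$-run of period $>dp$ corresponds to a run or $d$-short run of $t'$, disposes of noncubic and $d$-short runs of $t'$ with $O(1)$ comparisons each, kills small-period cubic runs of $t'$ by the ``aperiodic chain'' argument, and recurses on large-period cubic runs with the recursion depth-charged via Lemma~\ref{CubicRunExp} (with $d=48$). Nothing playing the role of this $t\to t'$ reduction appears in your sketch, so the proposal is a program rather than a proof.

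Moreover, one of your counting claims is false and cannot be repaired at the level of $t$ itself: you assert that the exponent-$<3$ runs of a given (sub-)scale number $O(n/2^k)$, but the paper's own example $(01)^k(10)^k$ from~\cite{Kolpakov} has $\Theta(n)$ runs of exponent exactly~$2$ whose periods are concentrated in a constant number of dyadic scales (about $n/8$ of them with period in $[n/4,n/2)$). The Fine--Wilf overlap argument that yields an $O(n/\delta)$ per-scale count needs exponent at least~$3$ --- this is exactly why Lemma~\ref{CubicRunExp} is stated only for cubic runs and why the paper warns that the analogous statement for all runs fails. The paper avoids this trap by charging noncubic and $d$-short runs not per scale in $t$ but against the global counts of Lemmas~\ref{RunsNum} and~\ref{AlmostRunsNum} applied to the derived string $t'$ of length $n/p$, which is only possible because of the correspondence your proposal lacks. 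So both the location mechanism and the accounting for the exponent-$<3$ runs constitute genuine gaps.
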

\begin{proof}
By Lemma~\ref{EqualComp}, it is sufficient to build a decision tree with linear basic height. So, below we count only the comparisons yielding inequalities and refer to them as ``inequality comparisons''. In fact we prove the following more general fact: for a given string $t$ of length $n$ and a positive integer $p$, we find all $p$-runs performing $O(n/p)$ inequality comparisons. To find all runs of a string, we simply put $p = 1$.

The algorithm consists of five steps. Each step finds $p$-runs of $t$ with a given property. Let us choose a positive integer constant $d \ge 2$ (the exact value is defined below.) The algorithm is roughly as follows:
\begin{enumerate}
\item find in a straightforward manner all $p$-runs having periods $\le dp$;
\item using the information from step~1, build a new string $t'$ of length $n/p$ such that periodic factors of $t$ and $t'$ are strongly related to each other;
\item find $p$-runs of $t$ related to periodic factors of $t'$ with exponents less than~$3$;
\item find $p$-runs of $t$ related to periodic factors of $t'$ with periods less than~$d$;
\item find $p$-runs of $t$ related to other periodic factors of $t'$ by calling steps~1--5 recursively for some substrings of~$t$.
\end{enumerate}

\textbf{Step 1.} Initially, we split the string $t$ into $n/p$ contiguous blocks of length $p$ (if $n$ is not a multiple of $p$, we pad $t$ on the right to the required length with a special symbol which is less than all other symbols.) For each $i\in\overline{1,n/p}$ and $j \in \overline{1,d}$, we denote by $m_{i,j}$ the minimal $k\in \overline{1,p}$ such that $t[(i{-}1)p{+}k] \ne t[(i{-}1)p{+}k{+}jp]$ and we put $m_{i,j} = -1$ if $ip + jp > n$ or there is no such $k$. To compute $m_{i,j}$, we simply compare $t[(i{-}1)p{+}k]$ and $t[(i{-}1)p{+}k{+}jp]$ for $k = 1,2,\ldots, p$ until $t[(i{-}1)p{+}k] \ne t[(i{-}1)p{+}k{+}jp]$.

\begin{example} \label{texample}
Let $t = bbba\cdot aada\cdot aaaa\cdot aaaa\cdot aada\cdot aaaa\cdot aaab\cdot bbbb\cdot bbbb$, $p = 4$, $d = 2$. The following table contains $m_{i,j}$ for $j = 1,2$:
$$
\begin{array}{r||c|c|c|c|c|c|c|c|c}
i                   &  1      &  2      &  3      &  4      &  5      &  6      &  7      &  8       &   9  \\
\hline
\hline
t[(i{-}1)p{+}1..ip] &  bbba   &  aada   &  aaaa   &  aaaa   &  aada   &  aaaa   &  aaab   &   bbbb   &  bbbb \\
\hline
m_{i,1},m_{i,2}    &  1,1    &  3,3    &  -1,3    &  3,-1   &  3,3    &  4,1    &  1,1    &   -1,-1  &  -1,-1\\
\end{array}
$$
\end{example}
To compute a particular value of $m_{i,j}$, one needs at most one inequality comparison (zero inequality comparisons if the computed value is~$-1$.) Further, for each $i\in\overline{1,n/p}$ and $j \in \overline{1,d}$, we compare $t[ip{-}k]$ and $t[ip{-}k{+}jp]$ (if defined) for $k = 0, 1, \ldots, p{-}1$ until $t[ip{-}k] \ne t[ip{-}k{+}jp]$; similar to the above computation of $m_{i,j}$, this procedure performs at most one inequality comparison for any given $i$ and $j$. Hence, the total number of inequality comparisons is at most $2dn/p$. Once these comparisons are made, all $pq$-periodic runs in the input string are determined for all $q\in\overline{1,d}$.

\textbf{Step 2.} Now we build an auxiliary structure induced by $m_{i,j}$ on the string $t$. In this step, no comparisons are performed; we just establish some combinatorial properties required for further steps. We make use of the function:
$$
\sgn(a,b) = \left\{ \begin{array}{ll}-1,& a < b,\\ \phantom{-}0,& a = b,\\ \phantom{-}1, & a > b\enspace. \end{array} \right.
$$
We create a new string $t'$ of length $n/p$. The alphabet of this string can be taken arbitrary, we just describe which letters of $t'$ coincide and which do not. For each $i_1, i_2 \in \overline{1,n/p}$, $t'[i_1] = t'[i_2]$ iff for each $j \in \overline{1,d{-}1}$, either $m_{i_1,j} = m_{i_2,j} = -1$ or the following conditions hold simultaneously:
$$
\begin{array}{l}
m_{i_1,j} \ne -1, m_{i_2,j} \ne -1,\\
m_{i_1,j} = m_{i_2,j},\\
\sgn(t[(i_1{-}1)p{+}m_{i_1,j}], t[(i_1{-}1)p{+}m_{i_1,j}{+}jp]) = \sgn(t[(i_2{-}1)p{+}m_{i_2}], t[(i_2{-}1)p{+}m_{i_2,j}{+}jp])\enspace.
\end{array}
$$
Note that the status of each of these conditions is known from step~1. Also note that the values $m_{i,d}$ are not used in the definition of $t'$; we computed them only to find all $dp$-periodic $p$-runs.

\addtocounter{theorem}{-1}
\begin{example}[continued]
Denote $s_i = \sgn(t[(i{-}1)p{+}m_{i,1}], t[(i{-}1)p{+}m_{i,1}{+}p])$. Let $\{e,f,g,h,i,j\}$ be a new alphabet for the string $t'$. The following table contains $m_{i,1}$, $s_i$, and $t'$:
$$
\begin{array}{r||c|c|c|c|c|c|c|c|c}
i                    &  1      &  2      &  3      &  4      &  5      &  6      &  7      &  8      &  9\\
\hline
\hline
t[(i{-}1)p{+}1..ip]  &  bbba   &  aada   &  aaaa   &  aaaa   &  aada   &  aaaa   &  aaab   &  bbbb   &  bbbb   \\
\hline
m_{i,1}              &    1    &    3    &    -1   &    3    &    3    &    4    &    1    &   -1    &   -1    \\
\hline
s_i                  &    1    &    1    &    -    &   -1    &    1    &   -1    &   -1    &   -     &    -    \\
\hline
t'[i]                &    j    &    e    &    f    &    g    &    e    &    h    &    i    &   f     &     f
\end{array}
$$
\end{example}

If $t$ contains two identical sequences of $d$ blocks each, i.e., $t[(i_1{-}1)p{+}1..(i_1{-}1{+}d)p] = t[(i_2{-}1)p{+}1..(i_2{-}1{+}d)p]$ for some $i_1$, $i_2$, then $m_{i_1,j} = m_{i_2,j}$ for each $j \in \overline{1,d{-}1}$ and hence $t'[i_1] = t'[i_2]$. This is why $t'[2] = t'[5]$ in Example~\ref{texample}. On the other hand, equal letters in $t'$ may correspond to different sequences of blocks in $t$, like the letters $t'[3] = t'[8]$ in Example~\ref{texample}. The latter property makes the subsequent argument more involved but allows us to keep the number of inequality comparisons linear. Let us point out the relations between periodic factors of $t$ and $t'$.

Let for some $q > d$, $t[k{+}1..k{+}l]$ be a $pq$-periodic $p$-run, i.e., $t[k{+}1..k{+}l]$ is a $p$-run that is not found on step~1. Denote $k' = \lceil k/p\rceil$. Since $t[k{+}1..k{+}l]$ is $pq$-periodic, $t'$ has some periodicity in the corresponding substring, namely, $u = t'[k'{+}1..k'{+}\lfloor l/p\rfloor{-}d]$ has the period $q$ (see example below). Let $t'[k_1..k_2]$ be the largest substring of $t'$ containing $u$ and having the period $q$. Since $2q \le \lfloor l/p\rfloor = |u| + d$, $t'[k_1..k_2]$ is either a $d$-short run with the minimal period $q$ or a run whose minimal period divides $q$.

\addtocounter{theorem}{-1}
\begin{example}[continued]
Consider Fig.~\ref{fig:corrun}. Let $k = 3$, $l = 24$. The string $t[k{+}1..k{+}l] = a\cdot aada\cdot aaaa\cdot aaaa\cdot aada\cdot aaaa\cdot aaa$ is a $p$-run with the minimal period $pq = 12$ (here $q = 3 > 2 = d$). Denote $k' = \lceil k/p \rceil = 1$, $k_1 = 2$, and $k_2 = 5$. The string $t'[k'{+}1..k'{+}\lfloor l/p\rfloor{-}d] = t'[k_1..k_2] = t'[2..5] = efge$ is a $d$-short run of $t'$ with the minimal period~$q = 3$.
\end{example}
\begin{figure}[htb]
\centering
\includegraphics[scale=0.55]{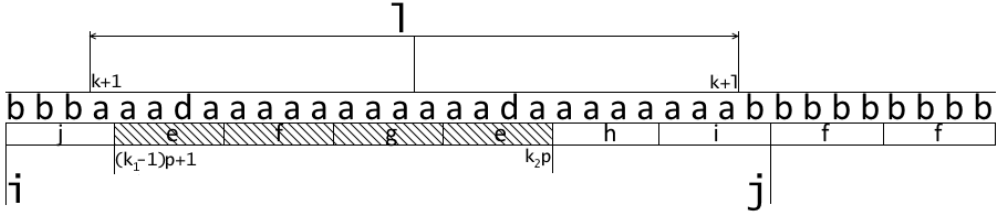}
\caption{A $p$-run corresponding to $d$-short run $t'[k_1..k_2] = efge$, where $k_1 = 2$, $k_2 = 5$, $p = 4$, $d = 2$, $q = 3$, $k = 3$, $l = 2pq = 24$, $i = (k_1{-}2)p{+}1 = 1$, $j = (k_2{+}d)p = 28$.}
\label{fig:corrun}
\end{figure}

Conversely, given a run or $d$-short run $t'[k_1..k_2]$ with the minimal period $q$, we say that a $p$-run $t[k{+}1..k{+}l]$ \emph{corresponds to $t'[k_1..k_2]$} (or $t[k{+}1..k{+}1]$ is a $p$-run \emph{corresponding to $t'[k_1..k_2]$}) if $t[k{+}1..k{+}l]$ is, for some integer $r$, $rpq$-periodic and $t'[k'{+}1..k'{+}\lfloor l/p\rfloor{-}d]$, where $k' = \lceil k/p\rceil$, is a substring of $t'[k_1..k_2]$ (see Fig.~\ref{fig:corrun} and Example~\ref{texample}). 

The above observation shows that each $p$-run of $t$ that is not found on step~1 corresponds to some run or $d$-short run of $t'$. Let us describe the substring that must contain all $p$-runs of $t$ corresponding to a given run or $d$-short run $t'[k_1..k_2]$. Denote $i = (k_1 - 2)p + 1$ and $j = (k_2 + d)p$. Now it is easy to see that if $t[k{+}1..k{+}l]$ is a $p$-run corresponding to $t'[k_1..k_2]$, then $t[k{+}1..k{+}l]$ is a substring of $t[i..j]$.

\addtocounter{theorem}{-1}
\begin{example}[continued]
For $k = 3$ and $l = 24$, the string $t[k{+}1..k{+}l] = a\cdot aada\cdot aaaa\cdot aaaa\cdot aada\cdot aaaa\cdot aaa$ is a $p$-run corresponding to $t'[k_1..k_2] = efge$, where $k_1 = 2$, $k_2 = 5$. Indeed, the string $t'[k'{+}1..k'{+}\lfloor l/p\rfloor{-}d] = t'[2..5]$, for $k' = \lceil k/p\rceil = 1$, is a substring of $t'[k_1..k_2]$. Denote $i = (k_1 - 2)p + 1 = 1$, $j = (k_2 + d)p = 28$. Observe that $t[k{+}1..k{+}l] = t[4..27]$ is a substring of $t[i..j] = t[1..28]$.
\end{example}

It is possible that there is another $p$-run of $t$ corresponding to the string $t'[k_1..k_2]$. Consider the following example.

\begin{example} \label{twoCorr}
Let $t = fabcdedabcdedaaifjfaaifjff$, $p = 2$, $d = 2$. Denote $s_i = \sgn(t[(i{-}1)p{+}m_{i,1}], t[(i{-}1)p{+}m_{i,1}{+}p])$. Let $\{w,x,y,z\}$ be a new alphabet for the string $t'$. The following table contains $m_{i,1}$, $s_i$, and $t'$:
$$
\begin{array}{r||c|c|c|c|c|c|c|c|c|c|c|c|c}
i                   & 1  & 2  & 3  & 4  & 5  & 6  & 7  & 8  & 9  & 10 & 11 & 12 & 13 \\
\hline
\hline
t[(i{-}1)p{+}1..ip] & fa & bc & de & da & bc & de & da & ai & fj & fa & ai & fj & ff \\
m_{i,1}             & 1  & 1  & 2  & 1  & 1  & 2  & 1  & 1  & 2  & 1  & 1  & 2  & -1 \\
s_i                 & 1  &-1  & 1  & 1  &-1  & 1  & 1  &-1  & 1  & 1  &-1  & 1  & -  \\
t'[i]               & x  & y  & z  & x  & y  & z  & x  & y  & z  & x  & y  & z  & w
\end{array}
$$
Note that $p$-runs $t[2..13] = abcded\cdot abcded$ and $t[14..25] = aaifjf\cdot aaifjf$ correspond to the same $p$-run of $t'$, namely, $t'[1..12] = xyz\cdot xyz\cdot xyz\cdot xyz$.
\end{example}

Thus to find for all $q > d$ all $pq$-periodic $p$-runs of $t$, we must process all runs and $d$-short runs of~$t'$.

\textbf{Step 3.} Consider a noncubic run $t'[k_1..k_2]$. Let $q$ be its minimal period. Denote $i = (k_1 - 2)p + 1$ and $j = (k_2 + d)p$. The above analysis shows that any $p$-run of $t$ corresponding to $t'[k_1..k_2]$ is a $p'$-periodic run of $t[i..j]$ for some $p' = pq, 2pq, \ldots, lpq$, where $l = \lfloor (j - i + 1) / (2pq) \rfloor$. Since $(k_2 - k_1 + 1)/q < 3$, we have $l = \lfloor(k_2 - k_1 + 2)/(2q) + d/(2q)\rfloor = O(d)$. Hence to find all $p$-runs of $t[i..j]$, it suffices to find for each $p' = pq, 2pq, \ldots, lpq$ all $p'$-periodic runs of $t[i..j]$ using Lemma~\ref{ConstRun}. Thus the processing performs $O(l(j - i + 1)/pq) = O(d^2) = O(1)$ inequality comparisons. Analogously we process $d$-short runs of $t'$. Therefore, by Lemmas~\ref{RunsNum} and~\ref{AlmostRunsNum}, only $O(|t'|) = O(n/p)$ inequality comparisons are required to process all $d$-short runs and noncubic runs of $t'$.

Now it suffices to find all $p$-runs of $t$ corresponding to cubic runs of $t'$.

\textbf{Step 4.} Let $t'[k_1..k_2]$ be a cubic run with the minimal period $q$. In this step we consider the case $q < d$. It turns out that such small-periodic substrings of $t'$ correspond to substrings in $t$ that are either periodic and discovered at step~1, or aperiodic. Therefore this step does not include any comparisons. The precise explanation follows.

Suppose that $m_{k,q} = -1$ for all $k\in \overline{k_1,k_1{+}q{-}1}$. Then $m_{k,q} = -1$ for all $k = k_1,\ldots ,k_2$ by periodicity of $t'[k_1..k_2]$. Therefore by the definition of $m_{k,q}$, we have $t[k] = t[k{+}pq]$ for all $k \in \overline{(k_1{-}1)p{+}1,k_2p}$. Hence the substring $t[(k_1{-}1)p{+}1..k_2p{+}pq]$ has the period $pq$. Now it follows from Lemma~\ref{StringsIntersect} that any $p$-run of $t$ corresponding to $t'[k_1..k_2]$ is $pq$-periodic and therefore was found on step~1 because $pq < dp$.

Suppose that $m_{k,q}\ne -1$ for some $k\in \overline{k_1,k_1{+}q{-}1}$. Denote $s = (k - 1)p + m_{k,q}$, $l = \lfloor (k_2p - s)/pq \rfloor + 1$. Let $r \in \overline{1,l}$. Since $t'[k] = t'[k{+}rq]$, we have $m_{k,q} = m_{k{+}rq,q}$ and $\sgn(t[s], t[s{+}pq]) = \sgn(t[s{+}rpq], t[s{+}(r{+}1)pq])$ (see Fig.~\ref{fig:cubicrun}). Therefore, one of the following sequences of inequalities holds:
\begin{equation}
\begin{array}{l}
t[s] < t[s{+}pq] < t[s{+}2pq] < \ldots < t[s{+}lpq], \\
t[s] > t[s{+}pq] > t[s{+}2pq] > \ldots > t[s{+}lpq]\enspace. \label{eq:chains}
\end{array}
\end{equation}
\begin{figure}[htb]
\centering
\includegraphics[scale=0.65]{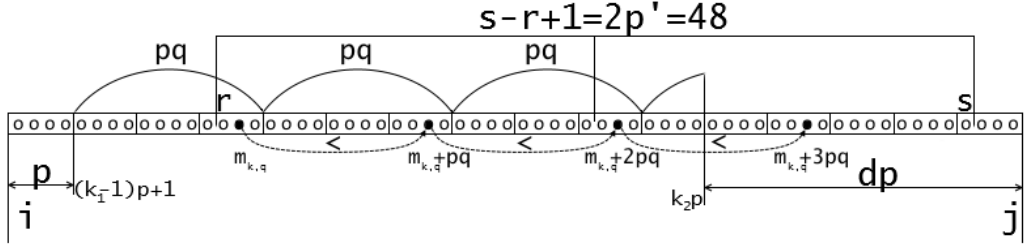}
\caption{A cubic run of $t'$ with the shortest period $q = 3 < d = 5$, where $p = 4$, $k_1 = 2$, $k_2 = 11$, $k = 4$, $m_{k,q} = 15$, $l = 3$, $p' = 2pq = 24$.}
\label{fig:cubicrun}
\end{figure}

Let $p'$ be a multiple of $pq$ such that $p' > dp$. Now it suffices to show that due to the found ``aperiodic chain'', there are no $p'$-periodic $p$-runs of $t$ corresponding to $t'[k_1..k_2]$.

Suppose, to the contrary, $t[r..s]$ is a $p'$-periodic $p$-run corresponding to $t'[k_1..k_2]$ (see Fig.~\ref{fig:cubicrun}). Denote $r' = \lceil (r - 1)/p\rceil$ and $l' = \lfloor (s - r + 1)/p\rfloor$. By the definition of corresponding $p$-runs, $u = t'[r'{+}1..r'{+}l'{-}d]$ is a substring of $t'[k_1..k_2]$. Since $s - r + 1 \ge 2p'$ and $p' > dp$, we have $|u| = l' - d \ge 2p'/p - d > p'/p$. Therefore, $r \le r'p + m_{r'{+}1,q} < r'p + m_{r'{+}1,q} + p' \le s$ and the inequalities~\eqref{eq:chains} imply $t[r'p{+}m_{r'{+}1,q}] \ne t[r'p{+}m_{r'{+}1,q} + p']$, a contradiction.

\textbf{Step 5.} Let $t'[k_1..k_2]$ be a cubic run with the minimal period $q$ such that $q \ge d$. Denote $i = (k_1 - 2)p + 1$ and $j = (k_2 + d)p$. To find all $p$-runs corresponding to the run $t'[k_1..k_2]$, we make a recursive call executing steps~1--5 again with new parameters $n = j - i + 1$, $p = pq$, and $t = t[i..j]$.

After the analysis of all cubic runs of $t'$, all $p$-runs of $t$ are found and the algorithm stops. Now it suffices to estimate the number of inequality comparisons performed during any run of the described algorithm.

\textbf{Time analysis.} As shown above, steps~1--4 require $O(n/p)$ inequality comparisons. Let $t'[i_1..j_1], \ldots, t'[i_k..j_k]$ be the set of all cubic runs of $t'$ with the minimal period $d$ or greater. For $l \in \overline{1,k}$, denote by $q_l$ the minimal period of $t'[i_l..j_l]$ and denote $n_l = j_l - i_l + 1$. Let $T(n, p)$ be the number of inequality comparisons required by the algorithm to find all $p$-runs in a string of length $n$. Then $T(n, p)$ can be computed by the following formula:
$$
T(n,p) = O\left(n/p\right) + T\left((n_1 + d + 1)p, pq_1\right) + \ldots + T\left((n_k + d + 1)p, pq_k\right)\enspace.
$$
For $l\in \overline{1,k}$, the number $n_l/q_l$ is, by definition, the exponent of $t'[i_l..j_l]$. It follows from Lemma~\ref{CubicRunExp} that the sum of exponents of all cubic runs of $t'$ with the shortest period $d$ or larger is less than $\frac{12n}{d}$. Note that for any $l\in \overline{1,k}$, $n_l \ge 3q_l \ge 3d$ and therefore $n_l + d + 1 < 2n_l$. Thus assuming $d = 48$, we obtain $\frac{(n_1 + d + 1)p}{pq_1} + \ldots + \frac{(n_k + d + 1)p}{pq_k} < \frac{2n_1}{q_1} + \ldots + \frac{2n_k}{q_k} \le \frac{24n}{dp} = \frac{n}{2p}$. Finally, we have $T(n, p) = O(\frac{n}{2^0p} + \frac{n}{2^1p} + \frac{n}{2^2p} + \ldots) = O(n/p)$. The reference to Lemma~\ref{EqualComp} ends the proof.
\end{proof}

\section{Conclusion} \label{SectConc}

It remains an open problem whether there exists a linear RAM algorithm finding all runs in a string over a general ordered alphabet. However, it is still possible that there are nontrivial lower bounds in some more sophisticated models that are strongly related to RAM model.

\subparagraph*{Acknowledgement}

The author would like to thank Arseny M. Shur for many valuable comments and the help in the preparation of this paper.

\bibliographystyle{splncs}

\begin{thebibliography}{50}
\bibitem{AbouelhodaKurtzOhlenbusch} M.I. Abouelhoda, S. Kurtz, E. Ohlenbusch. \textsl{Replacing the suffix trees with enhanced suffix arrays}, J. Discrete Algorithms 2 (2004) 53--86.
\bibitem{AhoHirschbergUllman} A.V. Aho, D.S. Hirschberg, J.D. Ullman. \textsl{Bounds on the complexity of the longest common subsequence problem}, J. ACM 23(1) (1976) 1--12.
\bibitem{BannaiIInenagaNakashimaTakedaTsuruta} H. Bannai, T. I, S. Inenaga, Y. Nakashima, M. Takeda, K. Tsuruta. \textsl{The ``runs'' theorem}, preprint (2014) arXiv: 1406.0263v4
\bibitem{Breslauer} D. Breslauer. \textsl{Efficient string algorithmics}, Ph.D. thesis (1992).
\bibitem{ChenPuglisiSmyth} G. Chen, S.J. Puglisi, W.F. Smyth. \textsl{Lempel-Ziv factorization using less time and space}, Mathematics in Computer Science 1(4) (2008) 605--623.
\bibitem{Crochemore} M. Crochemore. \textsl{Transducers and repetitions}, Theoret. Comput. Sci. 45(1) (1986) 63--86.
\bibitem{CrochemoreIlieSmyth} M. Crochemore, L. Ilie, W.F. Smyth. \textsl{A simple algorithm for computing the Lempel-Ziv factorization}, Data Compression Conference 18 (2008) 482--488.
\bibitem{CrochemoreIlieTinta} M. Crochemore, L. Ilie, L. Tinta. \textsl{The ``runs'' conjecture}. Theoretical Computer Science 412(27) (2011) 2931--2941.
\bibitem{CKRRW} M. Crochemore, M. Kubica, J. Radoszewski, W. Rytter, T.  Wale\'n. \textsl{On the maximal sum of exponents of runs in a string}, Journal of Discrete Algorithms 14 (2012) 29--36.
\bibitem{EvenPrattRodeh} S. Even, V.R. Pratt, M. Rodeh. \textsl{Linear algorithm for data compression via string matching}, J. ACM 28(1) (1981) 16--24.
\bibitem{FialaGreene} E.R. Fiala, D.H. Greene. \textsl{Data compression with finite windows}, Communications of the ACM 32(4) (1989) 490--505.
\bibitem{FineWilf} N. J. Fine, H. S. Wilf. \textsl{Uniqueness theorem for periodic functions}, Proc. Amer. Math. Soc. 16 (1965) 109--114.
\bibitem{Kolpakov} R. Kolpakov. \textsl{On primary and secondary repetitions in words}, Theoretical Computer Science 418 (2012) 71--81.
\bibitem{KolpakovKucherov} R. Kolpakov, G. Kucherov. \textsl{Finding maximal repetitions in a word in linear time}, FOCS 40 (1999) 596--604.
\bibitem{KolpakovPodolskiyPosypkinKhrapov} R. Kolpakov, M. Podolskiy, M. Posypkin, N. Khrapov. \textsl{Searching of gapped repeats and subrepetitions in a word}, Combinatorial Pattern Matching (2014) 212--221.
\bibitem{LempelZiv} A. Lempel, J. Ziv. \textsl{On the complexity of finite sequences}, IEEE Trans. Inform. Theory 92(1) (1976) 75--81.
\bibitem{Main} M.G. Main. \textsl{Detecting leftmost maximal periodicities}, Discrete Appl. Math. 25 (1989) 145--153.
\bibitem{MainLorentz} M.G. Main, R.J. Lorentz. \textsl{Linear time recognition of squarefree strings}, Combinatorial Algorithms on Words (1985) 271--278.
\bibitem{OkanoharaSadakane} D. Okanohara, K. Sadakane. \textsl{An online algorithm for finding the longest previous factors}, Algorithms-ESA 2008. Springer Berlin Heidelberg (2008) 696--707.
\bibitem{Starikovskaya} T. Starikovskaya. \textsl{Computing Lempel-Ziv factorization online}, MFCS 2012 (2012) 789--799.
\bibitem{YamamotoIBannaiEtal} J. Yamamoto, T. I, H. Bannai, S. Inenaga, M. Takeda. \textsl{Faster compact on-line Lempel-Ziv factorization}, preprint (2013) arXiv: 1305.6095v1
\end{thebibliography}

\end{document}